\renewcommand\footnotetextcopyrightpermission[1]{}
\newcommand\scalemath[2]{\scalebox{#1}{\mbox{\ensuremath{\displaystyle #2}}}}
  \providecommand\BibTeX{{%
    \normalfont B\kern-0.5em{\scshape i\kern-0.25em b}\kern-0.8em\TeX}}}
\begin{document}

%%
%% The "title" command has an optional parameter,
%% allowing the author to define a "short title" to be used in page headers.
\title{Estimating the Percolation Centrality of Large Networks through Pseudo-dimension Theory}

%%
%% The "author" command and its associated commands are used to define
%% the authors and their affiliations.
%% Of note is the shared affiliation of the first two authors, and the
%% "authornote" and "authornotemark" commands
%% used to denote shared contribution to the research.

\author{Alane Marie de Lima}
\affiliation{%
  \institution{Federal University of Paran\'{a}}
   \city{Curitiba}
  \country{Brazil}}
\email{amlima@inf.ufpr.br}

\author{Murilo V. G. da Silva}
\affiliation{%
  \institution{Federal University of Paran\'{a}}
   \city{Curitiba}
  \country{Brazil}}
\email{murilo@inf.ufpr.br}

\author{Andr\'{e} Lu\'{i}s Vignatti}
\affiliation{%
  \institution{Federal University of Paran\'{a}}
   \city{Curitiba}
  \country{Brazil}}
\email{vignatti@inf.ufpr.br}

%%
%% By default, the full list of authors will be used in the page
%% headers. Often, this list is too long, and will overlap
%% other information printed in the page headers. This command allows
%% the author to define a more concise list
%% of authors' names for this purpose.
%\renewcommand{\shortauthors}{A. de Lima, M. da Silva, A. Vignatti}

%%
%% The abstract is a short summary of the work to be presented in the
%% article.
\begin{abstract}
In this work we investigate the problem of estimating the percolation centrality of every vertex in a graph. This centrality measure quantifies the importance of each vertex in a graph going through a contagious process. 
%under the light of sample complexity theory. 
It is an open problem whether the percolation centrality can be computed in $\mathcal{O}(n^{3-c})$ time, for any constant $c>0$. In this paper we present a $\mathcal{O}(m \log^2 n)$ randomized approximation algorithm for the percolation centrality for every vertex of $G$, generalizing techniques developed by 
Riondato, Upfal e Kornaropoulos
%\cite{RiondatoUpfal} 
(this complexity is reduced to $\mathcal{O}((m+n) \log n)$ for unweighted graphs). The estimation obtained by the algorithm is within $\epsilon$ of the exact value with probability $1- \delta$, for {\it fixed} constants $0 < \epsilon,\delta \leq 1$. In fact, we show in our experimental analysis that in the case of real world complex networks, the output produced by our algorithm is significantly closer to the exact values than its guarantee in terms of theoretical worst case analysis.  

%for estimating $p(v)$ we need a sample of approximately $\frac{\epsilon^2}{2}(\ln \frac{1}{\delta})$ shortest paths while for estimating $p(G)$ we need a sample of approximately $\frac{\epsilon^2}{2}(\ln \frac{1}{\delta})\log n $ for a randomized approximation algorithm that outputs with probability $1 - \delta$ an estimation within an error of $\epsilon$ from the exact for the percolation centrality of every vertex in $G$  
\end{abstract}

%%
%% The code below is generated by the tool at http://dl.acm.org/ccs.cfm.
%% Please copy and paste the code instead of the example below.
%%
\begin{CCSXML}
<ccs2012>
<concept>
<concept_id>10003752.10003809.10003635</concept_id>
<concept_desc>Theory of computation~Graph algorithms analysis</concept_desc>
<concept_significance>500</concept_significance>
</concept>
</ccs2012>
\end{CCSXML}

\ccsdesc[500]{Theory of computation~Graph algorithms analysis}

%%
%% Keywords. The author(s) should pick words that accurately describe
%% the work being presented. Separate the keywords with commas.
\keywords{percolation centrality; approximation algorithm; pseudo-dimension}

%%
%% This command processes the author and affiliation and title
%% information and builds the first part of the formatted document.
\maketitle

\section{Introduction}
\label{sec:intro}

The importance of a vertex in a graph can be quantified using centrality measures. In this paper we deal with the \emph{percolation centrality}, a measure relevant in applications where graphs are used to model a contagious process in a network (e.g., disease transmission or misinformation spreading). Centrality measures can be defined in terms of local properties, such as the vertex degree, or global properties, such as the betweenness centrality or the percolation centrality. The betweenness centrality of a vertex $v$,  roughly speaking, is the fraction of shortest paths containing $v$ as an intermediate vertex. The percolation centrality generalizes the betweenness centrality by allowing weights on the shortest paths, and the weight of a shortest path depends on the disparity between the degree of contamination of the two end vertices of such path.
% Eu troquei "fake news" por "misinformation spreading" para o paper ficar mais técnico. Nos slides da apresentação da pra usar "fake news" novamente. 

%O estudo do fenômeno de percolação em um sistema foi introduzido por \cite{Broadbent1957}, que atribuíram a aleatoriedade ao processo de passagem de um dado fluido, ou seja, cada parte do ``meio'' onde passa o fluido possui uma probabilidade de transmiti-lo ou não (estado de percolação). No caso de redes, a centralidade de percolação de um nó generaliza a centralidade de intermediação do mesmo. Enquanto esta última se refere a quantidade de caminhos mínimos que passam pelo nó, a centralidade de percolação, além desta quantidade, leva em consideração o estado de percolação dos nós da rede. 

The study of the percolation phenomenon in a physical system was introduced by \cite{Broadbent1957} in the context of the passage of a fluid in a medium.
%so that each part of the medium has a probability of letting the fluid pass through. 
In graphs, percolation centrality was proposed by Piraveenan {\it et al} (2013) \cite{Piraveenan2013}, where the medium are the vertices of a graph $G$ and each vertex $v$ in $G$ has a {\it percolation state} (reflecting the ``degree of contamination'' of $v$).
%and may have a percolation probability, which are not dealing in this paper, related to the change of states over time. 
The percolation centrality of $v$ is a function that depends on the topological connectivity and the states of the vertices of $G$ (the appropriate formal definitions are given in Section \ref{sec:preliminaries}).

The best known algorithms that exactly compute the betweenness centrality for every vertex of a graph depends on computing all its shortest paths \cite{Riondato2016} and, consequently, the same applies to the computation of percolation centrality. The best known algorithm for this task for weighted graphs runs in time $\mathcal{O}\left(n^3 / 2^{c \sqrt{\log n}}\right)$, for some constant $c$ \cite{ryan}. Currently it is a central open problem in graph theory whether this problem can be solved in $\mathcal{O}(n^{3-c})$, for any $c > 0$ and the hypothesis that there is no such algorithm is used in hardness arguments in some works\cite{abboudwilliams,abboudwilliams2}. In the particular case of sparse graphs, which are common in applications, the complexity of the exact computation for the betweenness centrality can be improved to $\mathcal{O}(n^2)$. However, the same is not known to be the true for percolation centrality and no subcubic algorithm is known even in such restricted scenario.

% DEPOIS MOVER ISSO PARA BAIXO QUANDO ESPECIFICAMOS OS CASOS ESPECIAIS DE GRAFOS SEM PESOS. POR HORA DEIXE.
%The best known algorithms run in time $\mathcal{O}\left(\frac{n^3}{2^{c \sqrt{\log n}}}\right)$, for some constant $c$, for the weighted case and $\mathcal{O}(n^{2.38})$ for the weighted case.  Both algorithms reduces to the problem of matrix multiplication. It is an open question if there is a $\mathcal{O}(n^{3-\epsilon})$ time algorithm for the weighted case (and also if there is a $\mathcal{O}(n^{3-\epsilon})$ time combinatorial algorithm for the unweighted case). In both cases, when the problem can be handled in practice, the algorithm of choice is Floyd-Warshall, which runs in $\mathcal{O}(n^3)$, or an adaptation of Brandes' Algorithm \cite{Brandes2001}, which runs in time $\mathcal{O}(nm)$ in a simpler version of the problem that is not the one we deal with in this work.

The present paper uses techniques developed in the work of Riondato and Kornaropoulos (2016) \cite{Riondato2016} and Riondato and Upfal (2018) \cite{RiondatoUpfal} on   betweenness centrality. A main theme in their work is the fact that for large scale graphs, even algorithms with time complexity that scales quadratically are inefficient in practice and high-quality approximations obtained with high confidence are usually sufficient in real world applications. The authors observe that keeping track of the exact centrality values, which may change continuously, provides little information gain. So the idea is to sample a subset of all shortest paths in the graph so that, for given $0 < \epsilon,\delta \leq 1$, they obtain values within $\epsilon$ from the exact value with probability $1 - \delta$
%Experimentally they show that modest requirements for $\delta$ and $\epsilon$ are sufficient, generally using $\delta = 0.1$ and $\epsilon = 0.015$.
\cite{Riondato2016}. 

We develop the main results of this paper having in mind both a theoretical and a practical perspective. From the theoretical perspective, for $0 < \epsilon,\delta \leq 1$ being any fixed constants, we show that the estimation of the percolation centrality can be done very fast. More precisely, for sparse graphs, the time complexity is $\mathcal{O}(n \log^2 n)$, the same complexity for the estimation of the betweenness centrality in such graphs, even though the exact computation of the first problem is much more expensive than the second problem. In the practical front, in Section \ref{subsec:pseudodimension} we give the relation between these constants and the sample size required for meeting the approximation guarantee and, in fact, our experimental evaluation shows that our algorithm produces results that are orders of magnitude better than the guarantees given by the referred theoretical analysis.

The techniques developed by \cite{Riondato2016} for the betweenness problem relies on the Vapnik-Chervonenkis (VC) dimension theory and the $\epsilon$-sample theorem. In our work, we use such techniques together with pseudo-dimension (a generalization of the VC-dimension) theory to show that the more general problem of estimating the percolation centrality of every vertex of $G$ can be computed in $\mathcal{O}(m \log^2 n)$ time
%, which is particularly fast in sparse graphs, a common situation in practical applications {\color{red} citar riondado e mais algum outro?} 
(this complexity is reduced to $\mathcal{O}((m+n) \log n)$ for unweighted graphs). We note that in the more recent work of Riondato and Upfal \cite{RiondatoUpfal} they also use pseudo-dimension theory for the betweenness problem, but they obtain different bounds for the sample size and they use pseudo-dimension in order to make use of Rademacher Averages. In our work we need pseudo-dimension theory by the very nature of the problem since percolation functions are real-valued and VC-dimension does not apply in our scenario.

% Acho que a parte e all pairs shortest path pode vir para cá.
% Vou deixar só a parte de diâmetro lá embaixo.

%  ... all pairs shortest path both for weighted and unweighted graphs are impractical for large graphs. The best known algorithms run in time $\mathcal{O}\left(\frac{n^3}{2^{c \sqrt{\log n}}}\right)$, for some constant $c$, for the weighted case and $\mathcal{O}(n^{2.38})$ for the weighted case. Both algorithms reduces to the problem of matrix multiplication problem. It is an open question if there is a $\mathcal{O}(n^{3-\epsilon})$ time algorithm for the weighted case (and also if there is a $\mathcal{O}(n^{3-\epsilon})$ time combinatorial algorithm for the unweighted case). In both cases, when the problem can be handled in practice, the algorithm of choice is Brandes' Algorithm {\color{red} citar Brandes}.

\section{Preliminaries}
\label{sec:preliminaries}

%ALANE: essa frase ficou muito parecida com a do paper do Riondatto e Ufpal --> DEI UMA ARRUMADA
We now introduce the definitions, notation and results we use as the groundwork of our proposed algorithms. 

\subsection{Graphs and Percolation Centrality} \label{subsec:graphs}

Given a graph $G = (V,E)$ (directed or undirected), the percolation states $x_v$ for each $v \in V$ and $(u,w) \in V^2$, let $S_{uw}$ be the set of all shortest paths from $u$ to $w$, and $\sigma_{uw} = |S_{uw}|$. For a given path $p_{uw} \in S_{uw}$, let $Int(p_{uw})$ be the set of internal vertices of $p_{uw}$, that is, $Int(p_{uw}) = \{v \in V : v \in p_{uw} \text{ and } u \neq v \neq w \}$. We denote $\sigma_{uw}(v)$ as the number of shortest paths from $u$ to $w$ that $v \in V$ is internal to. Let $P_u(w) = \{s \in V : (s,w) \in E_{p_{uw}}\}$ be the set of (immediate) \emph{predecessors} of $w$ in $p_{uw} \in S_{uw}$, where $E_{p_{uw}}$ is the set of edges of $p_{uw}$. We call the \emph{diameter} of $G$ as the largest shortest path in $G$. Let $0\le x_v \le 1$ be the percolation state of $v \in V$. We say $v$ is \emph{fully percolated} if $x_v = 1$, \emph{non-percolated} if $x_v = 0$ and \emph{partially percolated} if $0 < x < 1$. We say that a path from $u$ to $w$ is \emph{percolated} if $x_u - x_w > 0$. The percolation centrality is defined below. 

%\begin{definition}[Original Percolation Centrality {\color{red} cite Piraveenan et. al}]\label{def:percolation}
%    Let $R(x)= \max\{x,0\}$. Given a graph $G = (V,E)$ and percolation states $x_v, \forall %v \in V$, the \emph{percolation centrality} $p'(v)$ of a vertex $v \in V$ is defined as 
%    $$p'(v) = \sum\limits_{\substack{(u,w) \in V^2 \\ u \neq v \neq w}} %\frac{\sigma_{uw}(v)}{\sigma_{uw}} \frac{R(x_u - x_w)}{\sum\limits_{\substack{(f,d) \in V^2 %\\ f \neq v \neq d}} R(x_f - x_d)}.$$
%\end{definition}

%Definition \ref{def:percolation} follows the original one. We simplify the notation using $r_{uw}= R(x_u - x_w)$ and $S(v) =  \sum\limits_{\substack{(f,d)\in V^2 \\f\neq v\neq d}} R(x_f - x_d)$. Thus, Definition \ref{def:percolation} becomes $p'(v) = \frac{1}{S(v)} \sum\limits_{\substack{(u,w) \in V^2 \\ u \neq v \neq w }} \frac{\sigma_{uw}(v) r_{uw}}{\sigma_{uw}}$.

\begin{definition}[Percolation Centrality]\label{def:percolation2}
    Let $R(x)= \max\{x,0\}$. Given a graph $G = (V,E)$ and percolation states $x_v, \forall v \in V$, the \emph{percolation centrality} of a vertex $v \in V$ is defined as 
    %$$p(v) = \frac{1}{n(n-1)}\frac{1}{S(v)} \sum\limits_{\substack{(u,w) \in V^2 \\ u \neq v \neq w }} \frac{\sigma_{uw}(v) r_{uw}}{\sigma_{uw}}.$$
     \[p(v) = \frac{1}{n(n-1)} \sum\limits_{\substack{(u,w) \in V^2 \\ u \neq v \neq w}} \frac{\sigma_{uw}(v)}{\sigma_{uw}} \frac{R(x_u - x_w)}{\sum\limits_{\substack{(f,d) \in V^2 \\ f \neq v \neq d}} R(x_f - x_d)}.\]
\end{definition}

The definition originally presented in \cite{Piraveenan2013} does not have the normalization factor $\frac{1}{n(n-1)}$, introduced here to allow us to define a proper probability distribution in Section \ref{sec:pseudo}. This normalization obviously does not change the relation between the centrality of vertices. 

%
%---
\subsection{Sample Complexity and Pseudo-dimension} \label{subsec:pseudodimension}

In sampling algorithms, the sample complexity analysis relates the minimum size of a random sample required to estimate results that are consistent with the desired parameters of quality and confidence (e.g., in our case a minimum number of shortest paths that must be sampled). An upper bound to the Vapnik-Chervonenkis Dimension (VC-dimension) of a class of binary functions especially defined in order to model the particular problem that one is dealing provides an upper bound to sample size respecting such parameters. Generally speaking, the VC-dimension measures the expressiveness of a class of subsets defined on a set of points \cite{Riondato2016}. 

For the problem presented in this work, however, the class of functions that we need to deal are not binary. Hence, we use the \emph{pseudo-dimension}, which is a generalization of the VC-dimension for real-valued functions. An in-depth exposition of the definitions and results presented below can be found in the books of Shalev-Schwartz and Ben-David \cite{ShalevShwartz2014}, Anthony and Bartlett \cite{Anthony2009}, and Mohri et. al. \cite{Mohri2012}.  %ALANE: essa última frase das referências ficou parecida com Riondatto e Upfal --> paranado para pensar, nem tanto, eu reorganizei os termos da frase

\paragraph*{Empirical averages and $\epsilon$-representative samples} %ALANE: def. do Shalev-Schwartz / ver pag. 56 do livro Understanding ML
Given a domain $U$ and a set $\mathcal{H}$, let $\mathcal{F}$ be the family of functions from $U$ to $\mathbb{R}_+$ such that there is one $f_h \in \mathcal{F}$ for each $h \in \mathcal{H}$. Let $S$ be a collection of $r$ elements from $U$ sampled with respect to a probability distribution $\pi$. 

\begin{definition}\label{def:averages}
    For each $f_h \in \mathcal{F}$, such that $h \in \mathcal{H}$, we define the expectation of $f_h$ and its empirical average as $L_U$ and $L_S$, respectively, i.e.,
\[L_U(f_h) = \mathbb{E}_{u \in U} [f_h(u)] \quad \text{and} \quad L_S(f_h) = \frac{1}{r} \sum_{s \in S} f_h(s).\]
\end{definition}

\begin{definition}\label{def:erepresentative}
    Given $0 < \epsilon,\delta \leq 1$, a set $S$ is called \emph{$\epsilon$-representative} w.r.t. some domain $U$, a set $\mathcal{H}$, a family of functions $\mathcal{F}$ and a probability distribution $\pi$ if
    \[\forall f_h \in \mathcal{F}, |L_S(f_h) - L_U(f_h)| \leq \epsilon.\]
\end{definition}

By the linearity of expectation, the expected value of the empirical average $L_S(f_h)$ corresponds to $L_U(f_h)$. %Hence, $|L_S(f_h) - L_U(f_h)| = |L_S(f_h) - \mathbb{E}[L_S(f_h)]|$ is the deviation of the random variable $L_S(h)$ and its expectation. By the \emph{law of large numbers},
Hence, $|L_S(f_h) - L_U(f_h)| = |L_S(f_h) - \mathbb{E}_{f_h \in \mathcal{F}}[L_S(f_h)]|$, and by the \emph{law of large numbers}, $L_S(f_h)$ converges to its true expectation as $r$ goes to infinity, since $L_S(f_h)$ is the empirical average of $r$ random variables sampled independently and identically w.r.t. $\pi$. However, this law provides no information about the value $|L_S(f_h) - L_U(f_h)|$ for any sample size. %ALANE: parágrafo parecido com o do livro Understanding Machine Learning 
Thus, we use results from the VC-dimension and pseudo-dimension theory, which provide bounds on the size of the sample that guarantees that the maximum deviation of $|L_S(f_h) - L_U(f_h)|$ is within $\epsilon$ with probability at least $1-\delta$, for given $0 < \epsilon,\delta \leq 1$. 

\paragraph*{VC-dimension} A \emph{range space} is a pair $\mathcal{R} = (X, \mathcal{I})$, where $X$ is a domain (finite or infinite) and $\mathcal{I}$ is a collection of subsets of $X$, called \emph{ranges}. For a given $S \subseteq X$, the \emph{projection} of $\mathcal{I}$ on $S$ is the set $\mathcal{I}_S = \{S \cap I : I \in \mathcal{I}\}$. If $|\mathcal{I}_S| = 2^{|S|}$ then we say $S$ is \emph{shattered} by $\mathcal{I}$. The VC-dimension of a range space is the size of the largest subset $S$ that can be shattered by $\mathcal{I}$, i.e., 

\begin{definition}\label{def:vcdim}
    The VC-dimension of a range space $\mathcal{R} = (X,\mathcal{I})$, denoted by $VCDim(\mathcal{R})$, is 
    $VCDim(\mathcal{R}) = \max\{d : \exists S \subseteq X$ such that  $|S| = d \text{ and } |\mathcal{I}_S| = 2^d\}$.
\end{definition}

%ALANE: colocar exemplo de VC-Dimension? resposta: não

\paragraph*{Pseudo-dimension} Let $\mathcal{F}$ be a family of functions from some domain $U$ to the range $[0,1]$. Consider $D = U \times [0,1]$. For each $f \in \mathcal{F}$, there is a subset $R_f \subseteq D$ defined as $R_f = \{(x,t) : x \in U \text{ and } t \leq f(x)\}$. 

\begin{definition} [see \cite{Anthony2009}, Section 11.2] \label{def:pseudo}
    Let $\mathcal{R} = (U,\mathcal{F})$ and $\mathcal{R}' = (D,\mathcal{F}^+)$ be range spaces, where $\mathcal{F}^+ = \{R_f : f \in \mathcal{F}\}$. The \emph{pseudo-dimension} of $\mathcal{R}$, denoted by $PD(\mathcal{R})$, corresponds to the VC-dimension of $\mathcal{R}'$, i.e., $PD(\mathcal{R}) = VCDim(\mathcal{R}')$. 
\end{definition}

% mvgs: Resolvi tirar o Lema da dimensão 0 daqui e 
% colocar na seção do problema para 1 vértice, pois
% a seção lá embaixo tinha ficado meio pequena. Assim
% engorada ela um pouco.

Theorem \ref{teo:esample} states that having an upper bound to the pseudo-dimension of a range space allows to build an $\epsilon$-representative sample. 

%ALANE: ver a def no artigo que eles citam, não é do shalev, mas só colocamos ali pra lembrar na hora de citar
\begin{theorem}[see \cite{li2001}, Section 1] \label{teo:esample}
    Let $\mathcal{R}' = (D,\mathcal{F}^+)$ be a range space ($D = U \times [0,1]$) with $VCDim(\mathcal{R}') \leq d$ and a probability distribution $\pi$ on $U$. Given $0 < \epsilon,\delta \leq 1$, let $S \subseteq D$ be a collection of elements sampled w.r.t. $\pi$, with 
    \[|S| = \frac{c}{\epsilon^2} \left( d + \ln \frac{1}{\delta} \right)\]
    where $c$ is a universal positive constant. Then $S$ is $\epsilon$-representative with probability at least $1-\delta$.
\end{theorem}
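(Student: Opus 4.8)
My plan is to derive this statement from the classical $\epsilon$-approximation (a.k.a.\ $\epsilon$-sample) theorem for a range space of bounded VC-dimension, using the product-measure embedding that Definition~\ref{def:pseudo} already sets up. First I would observe that the expectation $L_U(f_h)$ of a function $f_h \colon U \to [0,1]$ is exactly the $\nu$-measure of its range $R_f \in \mathcal{F}^+$, where $\nu = \pi \times \lambda$ is the product of $\pi$ with the uniform (Lebesgue) measure $\lambda$ on $[0,1]$: since $f_h(u) = \int_0^1 \mathds{1}[t \le f_h(u)]\,dt$, we get $L_U(f_h) = \mathbb{E}_{u \sim \pi}[f_h(u)] = \nu(R_f)$. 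Correspondingly, drawing $S \subseteq D$ by sampling $u \sim \pi$ and $t \sim \lambda$ independently makes the empirical estimate of $L_U(f_h)$ equal to the empirical measure $\hat\nu(R_f) = |S \cap R_f|/|S|$. Thus the $\epsilon$-representative condition of Definition~\ref{def:erepresentative} becomes the statement that $S$ is an $\epsilon$-approximation of the binary range space $\mathcal{R}' = (D, \mathcal{F}^+)$, whose VC-dimension is at most $d$ by hypothesis. The whole problem is therefore reduced to proving $\sup_{R \in \mathcal{F}^+} |\hat\nu(R) - \nu(R)| \le \epsilon$ with probability at least $1-\delta$ for a sample of the stated size.

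For that uniform-convergence statement I would follow the standard three-step route. Step one is \emph{symmetrization}: introduce an independent ghost sample $S'$ of the same size $r$ and show $\Pr[\sup_R |\hat\nu(R) - \nu(R)| > \epsilon] \le 2\,\Pr[\sup_R |\hat\nu_S(R) - \hat\nu_{S'}(R)| > \epsilon/2]$, which replaces the unknown true measure by a second empirical measure. Step two is to freeze the $2r$ sampled points and invoke the Sauer--Shelah lemma: because $VCDim(\mathcal{R}') \le d$, the ranges of $\mathcal{F}^+$ cut the $2r$ points into at most $O((2r)^d)$ distinct subsets, so the supremum over the infinite family $\mathcal{F}^+$ collapses to a maximum over polynomially many effective ranges. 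Step three is a Rademacher (random-swap) argument: conditioned on the combined point set, randomly reassigning each point to $S$ or $S'$ and applying a Hoeffding/Chernoff tail bound shows that, for each fixed range, the two empirical measures differ by more than $\epsilon/2$ with probability at most $2\exp(-c'' r \epsilon^2)$.

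A union bound over the $O((2r)^d)$ effective ranges then gives an overall failure probability of order $(2r)^d \exp(-c'' r \epsilon^2)$, and I would finish by choosing $r$ so that this is at most $\delta$; taking logarithms, $r = \Theta\!\big(\tfrac{1}{\epsilon^2}(d + \ln\tfrac1\delta)\big)$ makes the exponential term dominate the polynomial growth function, which yields the claimed bound with a universal constant $c$.

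The main obstacle is precisely this last balancing step, and more specifically eliminating a spurious logarithmic factor. The naive Sauer--Shelah-plus-union-bound calculation produces a sample size of order $\tfrac{1}{\epsilon^2}(d \ln\tfrac1\epsilon + \ln\tfrac1\delta)$, i.e.\ with an extra $\ln(1/\epsilon)$ attached to the dimension term, because the growth function is only polynomial in $r$ rather than constant. Reaching the sharp form $\tfrac{c}{\epsilon^2}(d + \ln\tfrac1\delta)$ stated here requires the finer chaining / improved covering-number analysis of Li, Long and Srinivasan~\cite{li2001} (in the spirit of Talagrand-type bounds), which is the genuinely hard ingredient; by comparison the symmetrization and Sauer--Shelah steps, and the verification that $L_U(f_h) = \nu(R_f)$, are routine.
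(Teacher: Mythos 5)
The paper does not prove this statement at all: it is imported verbatim from Li, Long and Srinivasan \cite{li2001} (see the bracketed attribution in the theorem header), so there is no in-paper proof to compare yours against. Judged on its own terms, your outline follows the standard route, and you are right about where the real difficulty lies: symmetrization, Sauer--Shelah and the Chernoff step only yield a sample size of order $\tfrac{1}{\epsilon^2}(d\ln\tfrac{1}{\epsilon}+\ln\tfrac{1}{\delta})$, and removing the $\ln(1/\epsilon)$ factor to get the form stated here genuinely requires the chaining/packing-number refinement of \cite{li2001}. You correctly attribute that ingredient rather than claiming to reprove it, which is the honest way to present this.

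There is, however, one genuine gap in your reduction. You identify the empirical estimate of $L_U(f_h)$ with the empirical measure $\hat\nu(R_f)=|S\cap R_f|/|S|$ of the range $R_f$ under a sample of augmented points $(u_i,t_i)\sim\pi\times\lambda$. But the quantity the theorem must control (Definition~\ref{def:averages}, and the way the theorem is invoked in Theorem~\ref{theo:alg2}, where $L_S(f_v)=\tilde p(v)=\tfrac{1}{r}\sum_i f_v(p_{u_iw_i})$) is the average of the \emph{function values} $\tfrac{1}{r}\sum_i f_h(u_i)$, not the fraction $\tfrac{1}{r}\sum_i \mathds{1}\left[t_i\le f_h(u_i)\right]$ of augmented points falling below the graph. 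These two random variables share the expectation $\nu(R_f)=L_U(f_h)$ but are not pathwise equal, so an $\epsilon$-approximation guarantee for the range space $(D,\mathcal{F}^+)$ does not by itself bound $\sup_{f}|L_S(f)-L_U(f)|$. You need a bridging step --- e.g.\ a second uniform concentration argument showing $\sup_f\bigl|\tfrac{1}{r}\sum_i\bigl(f(u_i)-\mathds{1}\left[t_i\le f(u_i)\right]\bigr)\bigr|\le\epsilon$ using the same Sauer--Shelah bound on $\mathcal{F}^+$ conditioned on the $u_i$ --- or, more cleanly, a direct proof for function averages via Pollard/Haussler covering numbers bounded by the pseudo-dimension, which is in fact how \cite{li2001} and the textbook treatments \cite{Anthony2009,Mohri2012} proceed. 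The repair uses only tools you already invoke, but as written the reduction proves a slightly different statement than the one the paper needs.
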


In the work of \cite{loffler2009shape}, it has been proven that the constant $c$ is approximately $\frac{1}{2}$. Lemmas \ref{lemma:atmostone} and \ref{lemma:nozero}, stated an proved by Riondato and Upfal (2018), present constraints on the sets that can be shattered by a range set $\mathcal{F}^+$. 

\begin{lemma}[see \cite{RiondatoUpfal}, Section 3.3] \label{lemma:atmostone}
    Let $B \subseteq D$ be a set that is shattered by $\mathcal{F}^+$. Then, $B$ can contain at most one $(d,y) \in D$ for each $d \in U$ and for a $y \in [0,1]$.
\end{lemma}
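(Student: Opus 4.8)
The plan is to argue by contradiction and to exploit the fact that each range $R_f \in \mathcal{F}^+$ is precisely the region lying weakly below the graph of $f$, so that membership in $R_f$ is \emph{monotone} in the second coordinate. Suppose, toward a contradiction, that a shattered set $B$ contains two distinct points sharing the same first coordinate, say $(d, y_1)$ and $(d, y_2)$ with $d \in U$ and $y_1 < y_2$.

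The key observation is that, for this fixed $d$, membership in any $R_f$ depends only on the value $f(d)$: by the definition of $R_f$ in Definition \ref{def:pseudo}, we have $(d,t) \in R_f$ if and only if $t \leq f(d)$. Consequently, whenever $(d, y_2) \in R_f$ we have $y_2 \leq f(d)$, and since $y_1 < y_2$ this forces $y_1 \leq f(d)$, i.e., $(d, y_1) \in R_f$ as well. In other words, no range in $\mathcal{F}^+$ can contain $(d, y_2)$ without also containing $(d, y_1)$.

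From here I would translate this into a statement about the projection $\mathcal{F}^+_B = \{B \cap R_f : R_f \in \mathcal{F}^+\}$. Consider the subset $A \subseteq B$ that contains $(d, y_2)$ but not $(d, y_1)$, for instance $A = \{(d, y_2)\}$. By the previous paragraph there is no $f$ with $B \cap R_f = A$, since any $R_f$ witnessing $(d, y_2) \in A$ must also capture $(d, y_1)$. Hence $A \notin \mathcal{F}^+_B$, so $|\mathcal{F}^+_B| < 2^{|B|}$, which contradicts the assumption that $B$ is shattered (Definitions \ref{def:vcdim} and \ref{def:pseudo}). Therefore $B$ contains at most one point with first coordinate equal to any given $d \in U$.

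I do not anticipate a genuine obstacle here; the only care needed is to phrase the monotonicity of the sub-graph regions $R_f$ precisely and to exhibit a single labeling of $B$ that cannot be realized, which immediately breaks shattering.
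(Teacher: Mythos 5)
Your argument is correct: the monotonicity of the sub-graph ranges $R_f$ in the second coordinate means any range containing $(d,y_2)$ must contain $(d,y_1)$ for $y_1<y_2$, so the labeling that isolates $(d,y_2)$ from $(d,y_1)$ is unrealizable and shattering fails. The paper states this lemma without proof, deferring to Riondato and Upfal, and your argument is essentially the same standard one given there.
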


\begin{lemma}[see \cite{RiondatoUpfal}, Section 3.3] \label{lemma:nozero}
    Let $B \subseteq D$ be a set that is shattered by $\mathcal{F}^+$. Then, $B$ does not contain any element in the form $(d,0) \in D$, for each $d \in U$.
\end{lemma}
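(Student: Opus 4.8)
The plan is to argue by contradiction, exploiting the fact that the ranges in $\mathcal{F}^+$ are exactly the regions lying below the graphs of functions that, by assumption, take non-negative values. First I would recall precisely what shattering demands: a set $B \subseteq D$ is shattered by $\mathcal{F}^+$ when every subset $A \subseteq B$ arises as $A = B \cap R_f$ for some $f \in \mathcal{F}$. So to rule out a point of the form $(d,0)$ from a shattered set, it suffices to exhibit a single subset of $B$ that no range can carve out once $(d,0)$ is present.

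The key observation is that the point $(d,0)$ belongs to \emph{every} range $R_f$. Indeed, by Definition \ref{def:pseudo} we have $R_f = \{(x,t) \in D : t \leq f(x)\}$, and since each $f \in \mathcal{F}$ maps into $[0,1]$ we have $f(d) \geq 0$. Hence the inequality $0 \leq f(d)$ holds for every $f$, which means $(d,0) \in R_f$ for all $f \in \mathcal{F}$, independently of the function chosen.

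Now I would assemble the contradiction. Suppose $(d,0) \in B$, and consider the subset $A = B \setminus \{(d,0)\}$, which does not contain $(d,0)$. For $B$ to be shattered there must exist some $f \in \mathcal{F}$ with $B \cap R_f = A$, and in particular this would force $(d,0) \notin R_f$. But we have just shown $(d,0) \in R_f$ for every $f$, so $(d,0) \in B \cap R_f$ for every range, and no range can realize $A$. This contradicts the assumption that $B$ is shattered, so $B$ contains no element of the form $(d,0)$.

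The argument presents no real obstacle; the only point requiring care is the direction of the inequality in the definition of $R_f$, namely that the region lies \emph{below} the graph of $f$. This is exactly what pins the line $t = 0$ inside every range regardless of the chosen function, and it is the feature that drives the whole contradiction.
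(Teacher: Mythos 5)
Your proof is correct, and it is essentially the standard argument: since every $f \in \mathcal{F}$ takes values in $[0,1]$, the point $(d,0)$ satisfies $0 \leq f(d)$ for all $f$ and hence lies in every range $R_f$, so the subset $B \setminus \{(d,0)\}$ can never be realized as $B \cap R_f$. Note that the paper does not reprove this lemma but imports it from Riondato and Upfal, whose proof proceeds by exactly this observation, so your argument coincides with the intended one.
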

%--------------

\section{Pseudo-dimension and percolated shortest paths}\label{sec:pseudo}

In this section we model the percolation centrality in terms of a range set of the percolated shortest paths. %We reformulate the percolation centrality as an expectation estimation task from the corresponding range space and probability distribution defined on the domain. %ALANE: tirei essa frase dos slides do Riondatto e Upfal
%ALANE: mudar o comment sobre a distribuição de prob mais perto de onde é utilizado ---> OK
%\subsection{A range space on shortest percolated paths} \label{subsec:rangeset} 
That is, for a given a graph $G = (V,E)$ and the percolation states $x_v$ for each $v \in V$, let $\mathcal{H} = V$, with $n = |V|$, and let $U = S_G$, where $S_G = \bigcup\limits_{(u,w) \in V^2 : u \neq w} S_{uw}$. For each $v \in V$, there is a set $\tau_v = \{p \in U : v \in Int(p)\}$. For a pair $(u,w) \in V^2$ and a path $p_{uw} \in S_G$, let $f_v : U \rightarrow [0,1]$ be the function

\[f_v(p_{uw}) = \frac{R(x_u - x_w)}{\sum\limits_{(f,d) \in V^2 : f \neq v \neq d} R(x_f - x_d)} \mathds{1}_{\tau_v}(p_{uw}).\]

The function $f_v$ gives the proportion of the percolation between $u$ and $w$ to the total percolation in the graph if $v \in Int(p_{uw})$. We define $\mathcal{F} = \{f_v : v \in V\}$.

Let $D = U \times [0,1]$. For each $f_v \in \mathcal{F}$, there is a range $R_v = R_{f_v} = \{(p_{uw},t) : p_{uw} \in U \text{ and } t \leq f_v(p_{uw})\}$. Note that each range $R_v$ contains the pairs $(p_{uw}, t)$, where $0 < t \leq 1$ such that $v \in Int(p_{uw})$ and $t \leq \frac{R(x_u - x_w)}{\sum\limits_{(f,d) \in V^2 : f \neq v \neq d}R(x_f - x_d)}$. We define $\mathcal{F}^+ = \{R_v : f_v \in \mathcal{F}\}$.

%-----
%\subsection{Percolation centrality estimation} \label{subsec:expecest} %ALANE: ver se vai mudar isso
Each $p_{uw} \in U$ is sampled according to the function $\pi(p_{uw}) = \frac{1}{n(n-1)}\frac{1}{\sigma_{uw}}$ (which is a valid probability distribution according to Theorem \ref{theo:probdist}), and $\mathbb{E}[f_v(p_{uw})] = p(v)$ for all $v \in V$, as proved in Theorem \ref{theo:expec}.

\begin{theorem}\label{theo:probdist}
    The function $\pi(p_{uw})$, for each $p_{uw} \in U$, is a valid probability distribution.
\end{theorem}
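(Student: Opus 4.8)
The plan is to verify that $\pi$ is a valid probability distribution by confirming two properties: nonnegativity of each $\pi(p_{uw})$, and that the total mass over the entire sample space $U = S_G$ equals $1$. Nonnegativity is immediate, since $n(n-1) > 0$ and $\sigma_{uw} = |S_{uw}| \geq 1$ for every pair $(u,w)$ with $u \neq w$ that contributes a path to $U$; thus each $\pi(p_{uw}) = \frac{1}{n(n-1)}\frac{1}{\sigma_{uw}}$ is strictly positive.

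The substance of the argument is to show $\sum_{p_{uw} \in U} \pi(p_{uw}) = 1$. First I would recall that $U = S_G = \bigcup_{(u,w) \in V^2 : u \neq w} S_{uw}$ decomposes as a disjoint union over the ordered pairs $(u,w)$ with $u \neq w$, since each shortest path has a well-defined source and target. This lets me rewrite the sum over all paths as an outer sum over pairs $(u,w)$ and an inner sum over the paths $p_{uw} \in S_{uw}$. Inside the inner sum, $\pi(p_{uw})$ does not depend on which particular path $p_{uw} \in S_{uw}$ is chosen—it depends only on the endpoints—so the inner sum contributes $\sigma_{uw}$ identical terms, each equal to $\frac{1}{n(n-1)}\frac{1}{\sigma_{uw}}$. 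The key cancellation is then that $\sigma_{uw} \cdot \frac{1}{\sigma_{uw}} = 1$, collapsing the inner sum to the constant $\frac{1}{n(n-1)}$.

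After that cancellation, what remains is $\sum_{(u,w) \in V^2 : u \neq w} \frac{1}{n(n-1)}$. I would then count the number of ordered pairs $(u,w)$ with $u \neq w$: there are exactly $n(n-1)$ such pairs. Multiplying this count by $\frac{1}{n(n-1)}$ yields $1$, which completes the verification.

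The main subtlety—rather than a true obstacle—is handling pairs $(u,w)$ for which $S_{uw} = \emptyset$, i.e., when $w$ is unreachable from $u$ (relevant for directed or disconnected graphs). For such pairs, $\sigma_{uw} = 0$, so the term $\frac{1}{\sigma_{uw}}$ is undefined, and these pairs simply contribute no paths to $U$. The clean way to address this is to observe that the outer sum genuinely ranges only over pairs with $S_{uw} \neq \emptyset$, so the counting step would give $1$ only under the implicit assumption that every ordered pair is connected (e.g., $G$ strongly connected). To keep the statement fully general, I would either restrict the normalization to reachable pairs or note that the definition of $p(v)$ and $\pi$ presumes the connected setting so that the count of contributing pairs is exactly $n(n-1)$; flagging this edge case is the one place where care is needed, while the algebra itself is routine.
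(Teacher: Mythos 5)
Your proof is correct and follows essentially the same route as the paper's: decompose the sum over $U$ into ordered pairs and paths within each $S_{uw}$, cancel $\sigma_{uw}\cdot\frac{1}{\sigma_{uw}}=1$, and count the $n(n-1)$ pairs. Your remark about pairs with $S_{uw}=\emptyset$ is a legitimate caveat that the paper's proof silently skips (its algorithm even tests for empty $S_{uw}$), but it does not change the argument in the connected setting the paper assumes.
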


\begin{proof}
    Let $S_{uw}$ be the set of shortest paths from $u$ to $w$, where $u \neq w$. Then,
 \[
\begin{split}
\sum\limits_{p_{uw} \in U} \pi(p_{uw}) 
&= \sum\limits_{p_{uw} \in U} \frac{1}{n(n-1)}\frac{1}{\sigma_{uw}} \\
&= \sum_{u \in V} \sum_{\substack{w \in V \\ w \neq u}} \sum_{p \in S_{uw}} \frac{1}{n(n-1)}\frac{1}{\sigma_{uw}}\\
&= \sum_{u \in V} \sum_{\substack{w \in V \\ w \neq u}} \frac{1}{n(n-1)}\frac{\sigma_{uw}}{\sigma_{uw}}\\
&= \frac{1}{n(n-1)} \sum_{u \in V} \sum\limits_{\substack{w \in V \\ w \neq u}} 1\\
&= \frac{1}{n(n-1)} \sum_{u \in V} (n-1)= 1.
\end{split} 
 \]    
\end{proof}

\begin{theorem} \label{theo:expec}
    For $f_v \in \mathcal{F}$ and for all $p_{uw} \in U$, such that each $p_{uw}$ is sampled according to the probability function $\pi(p_{uw})$, \[\mathbb{E}[f_v(p_{uw})] = p(v).\]
\end{theorem}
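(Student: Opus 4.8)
The plan is to compute the expectation directly from its definition as a weighted sum over $U$ and then reorganize the summation until it matches the defining expression for $p(v)$ in Definition \ref{def:percolation2}. Since each path is sampled with probability $\pi(p_{uw}) = \frac{1}{n(n-1)}\frac{1}{\sigma_{uw}}$, I would begin by writing
\[
\mathbb{E}[f_v(p_{uw})] = \sum_{p \in U} \pi(p)\, f_v(p) = \sum_{u \in V} \sum_{\substack{w \in V \\ w \neq u}} \sum_{p \in S_{uw}} \frac{1}{n(n-1)} \frac{1}{\sigma_{uw}}\, f_v(p),
\]
using exactly the decomposition of $U$ into $\bigcup_{(u,w)} S_{uw}$ employed in the proof of Theorem \ref{theo:probdist}.

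The key step is to handle the indicator $\mathds{1}_{\tau_v}(p)$ hidden inside $f_v(p)$. For a fixed ordered pair $(u,w)$, the factor $\frac{R(x_u - x_w)}{\sum_{(f,d) : f \neq v \neq d} R(x_f - x_d)}$ is constant over all $p \in S_{uw}$, so it can be pulled out of the innermost sum, leaving $\sum_{p \in S_{uw}} \mathds{1}_{\tau_v}(p)$. By the definition of $\tau_v$, this counts precisely the shortest $u$-$w$ paths having $v$ as an internal vertex, which is $\sigma_{uw}(v)$. Substituting this back, the factor $\frac{1}{\sigma_{uw}}$ coming from $\pi$ multiplies $\sigma_{uw}(v)$ to yield the ratio $\frac{\sigma_{uw}(v)}{\sigma_{uw}}$, turning the triple sum into $\frac{1}{n(n-1)} \sum_{(u,w) : w \neq u} \frac{\sigma_{uw}(v)}{\sigma_{uw}} \frac{R(x_u - x_w)}{\sum_{(f,d) : f \neq v \neq d} R(x_f - x_d)}$.

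The one point that needs care --- and which I expect to be the only real obstacle --- is reconciling the range of summation. The expectation ranges over all ordered pairs $(u,w)$ with $u \neq w$, whereas $p(v)$ sums only over pairs with $u \neq v \neq w$. These agree because whenever $u = v$ or $w = v$, no shortest $u$-$w$ path can have $v$ as an \emph{internal} vertex, so $\sigma_{uw}(v) = 0$ and those terms contribute nothing. After discarding these vanishing terms the summation range becomes exactly $\{(u,w) : u \neq v \neq w\}$, and the resulting expression is identical to the definition of $p(v)$, completing the argument.
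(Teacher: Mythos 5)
Your proposal is correct and follows essentially the same route as the paper's proof: expand the expectation as $\sum_{p \in U}\pi(p)f_v(p)$, decompose $U$ into the sets $S_{uw}$, collapse $\sum_{p \in S_{uw}}\mathds{1}_{\tau_v}(p)$ to $\sigma_{uw}(v)$, and cancel against the $1/\sigma_{uw}$ from $\pi$. Your explicit remark that the terms with $u=v$ or $w=v$ vanish (since $v$ cannot be internal to such paths) is a point the paper's proof passes over silently by simply writing the restricted summation range, so your version is if anything slightly more complete.
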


\begin{proof}
    For a given graph $G = (V,E)$ and for all $v \in V$, we have from Definition \ref{def:averages}    
    \begin{align*}
        L_U(f_v)&= \mathbb{E}_{p_{uw} \in U}[f_v(p_{uw})]\\ &= \sum_{p_{uw} \in U} \pi(p_{uw}) f_v(p_{uw})\\
        &= \sum_{p_{uw} \in U} \frac{1}{n(n-1)} \frac{1}{\sigma_{uw}} \frac{R(x_u - x_w)}{\sum\limits_{\substack{(f,d) \in V^2 \\ f \neq v \neq d}} R(x_f - x_d)} \mathds{1}_{\tau_v}(p_{uw})\\
        &\scalemath{0.95}{= \frac{1}{n(n-1)} \sum\limits_{\substack{u \in V \\ u \neq v}} \sum\limits_{\substack{w \in V \\ w \neq v \neq u}} \sum_{p \in S_{uw}} \frac{1}{\sigma_{uw}} \frac{R(x_u - x_w)}{\sum\limits_{\substack{(f,d) \in V^2 \\ f \neq v \neq d}} R(x_f - x_d)} \mathds{1}_{\tau_v}(p)}\\
        &= \frac{1}{n(n-1)} \sum\limits_{\substack{u \in V \\ u \neq v}} \sum\limits_{\substack{w \in V \\ w \neq v \neq u}} \frac{\sigma_{uw}(v)}{\sigma_{uw}} \frac{R(x_u - x_w)}{\sum\limits_{\substack{(f,d) \in V^2 \\ f \neq v \neq d}} R(x_f - x_d)}\\
        &= \frac{1}{n(n-1)}\sum\limits_{\substack{(u,w) \in V^2 \\ u \neq v \neq w}} \frac{\sigma_{uw}(v)}{\sigma_{uw}} \frac{R(x_u - x_w)}{\sum\limits_{\substack{(f,d) \in V^2 \\ f \neq v \neq d}} R(x_f - x_d)} = p(v).
    \end{align*}
\end{proof}

Let $S = \{(p_{u_iw_i}, 1 \leq i \leq r)\}$ be a collection of $r$ shortest paths sampled independently and identically from $U$. %and let $S_V \subseteq V$ the vertices from $V$ that are in some path of $S$.
Next, we define $\tilde{p}(v)$, the estimation to be computed by the algorithm, as the empirical average from Definition \ref{def:averages}:

\[\begin{split}
\tilde{p}(v) = L_S(f_v) 
&= \frac{1}{r} \sum\limits_{p_{u_i w_i} \in S} f_v(p_{u_i w_i})\\ 
&= \frac{1}{r} \sum\limits_{p_{u_i w_i} \in S} \frac{R(x_{u_i} - x_{w_i})}{\sum\limits_{\substack{(f,d) \in V^2 \\ f \neq v \neq d}} R(x_f - x_d)} \mathds{1}_{\tau_v}(p_{u_i w_i}).
\end{split}\]

%----
\section{Approximation to the percolation centrality} \label{sec:approximation}

We present an algorithm which its correctness and running time relies on the sample size given by Theorem \ref{teo:esample}. In order to bound the sample size, in Theorem \ref{theo:pseudoperc}, we prove an upper bound to the range space $\mathcal{R}$. We are aware that the main idea in the proof is similar the proof of a result for a different range space on the shortest paths obtained in \cite{Riondato2016} in their work using VC-dimension. For the sake of clarity, instead of trying to fit their definition to our model and use their result, we found it easier stating and proving the theorem directly for our range space.

\begin{theorem} \label{theo:pseudoperc}
Let $\mathcal{R} = (U, \mathcal{F})$ and $\mathcal{R}' = (D, \mathcal{F}^+)$ be the corresponding range spaces for the domain and range sets defined in Section \ref{sec:pseudo}, and let $diam(G)$ be the vertex-diameter of $G$. We have
    \[PD(\mathcal{R}) = VCDim(\mathcal{R}') \leq \lfloor \lg diam(G) - 2 \rfloor + 1.\]
\end{theorem}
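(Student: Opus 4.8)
The plan is to bound directly the VC-dimension of $\mathcal{R}' = (D, \mathcal{F}^+)$, since $PD(\mathcal{R}) = VCDim(\mathcal{R}')$ by Definition \ref{def:pseudo}. So I suppose some $B \subseteq D$ with $|B| = d$ is shattered by $\mathcal{F}^+$, and I aim to show that $d$ cannot exceed the claimed quantity, essentially by establishing $2^{d-1} \leq diam(G) - 2$. The first move is to invoke the structural constraints already established in Lemmas \ref{lemma:atmostone} and \ref{lemma:nozero}: a shattered set contains at most one pair $(p,t)$ per shortest path $p \in U$ and no pair of the form $(p,0)$. Hence I may write $B = \{(p_1,t_1), \dots, (p_d,t_d)\}$ where the paths $p_1, \dots, p_d$ are pairwise distinct and every $t_i > 0$.

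The heart of the argument is a counting step carried out on a single path. I fix an arbitrary element $(p_j,t_j) \in B$ and consider the $2^{d-1}$ subsets of $B$ that contain it. Because $B$ is shattered, each such subset $A$ is realized by some vertex $v$, meaning $R_v \cap B = A$; moreover distinct subsets are realized by distinct vertices, since each vertex $v$ determines exactly one set $R_v \cap B$. This produces at least $2^{d-1}$ distinct vertices $v$ with $(p_j,t_j) \in R_v$. I then unfold the definition of $R_v$: membership $(p_j,t_j) \in R_v$ means $t_j \leq f_v(p_j)$, and since $t_j > 0$ while $f_v$ carries the factor $\mathds{1}_{\tau_v}$, this forces $\mathds{1}_{\tau_v}(p_j) = 1$, i.e. $v \in Int(p_j)$. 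Therefore all of these $\geq 2^{d-1}$ distinct vertices are internal vertices of the single shortest path $p_j$.

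The final step bounds the number of internal vertices of a shortest path. Any shortest path in $G$ uses at most $diam(G)$ vertices (the vertex-diameter), so it has at most $diam(G) - 2$ internal vertices. Combining this with the previous count gives $2^{d-1} \leq |Int(p_j)| \leq diam(G) - 2$; taking base-two logarithms and using that $d$ is a nonnegative integer then yields the stated bound $PD(\mathcal{R}) \leq \lfloor \lg diam(G) - 2 \rfloor + 1$.

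I expect the main obstacle to be the bookkeeping of the second paragraph rather than any deep idea: one must argue cleanly that the realizing vertices are genuinely distinct (so the count $2^{d-1}$ is honest) and that the positivity of $t_j$, guaranteed precisely by Lemma \ref{lemma:nozero}, is exactly what collapses the range-membership condition to plain membership of $v$ in $Int(p_j)$. This is the point where the pseudo-dimension setting departs from, yet reduces to, the binary VC-dimension argument of \cite{Riondato2016}: the extra coordinate $t$ is neutralized by the two shattering lemmas, after which the percolation weights $R(x_u - x_w)$ play no further role and the bound is driven purely by how many internal vertices a single shortest path can host.
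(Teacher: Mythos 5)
Your proposal is correct and follows essentially the same route as the paper's own proof: both invoke Lemmas \ref{lemma:atmostone} and \ref{lemma:nozero}, count the $2^{d-1}$ ranges that must contain a fixed element of the shattered set, observe that each such range forces its defining vertex to lie in $Int(p_j)$, and bound the number of internal vertices by $diam(G)-2$. Your write-up is in fact slightly more explicit than the paper's about why the realizing vertices are distinct and why positivity of $t_j$ reduces range membership to $v \in Int(p_j)$, but the underlying argument is identical.
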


\begin{proof}
    Let $VCDim(\mathcal{R}') = k$, where $k \in \mathbb{N}$. Then, there is $S \subseteq D$ such that $|S| = k$ and $S$ is shattered by $\mathcal{F}^+$. From Lemmas \ref{lemma:atmostone} and \ref{lemma:nozero}, we know that for each $p_{uw} \in U$, there is at most one pair $(p_{uw}, t)$ in $S$ for some $t \in [0,1]$ and there is no pair in the form $(p_{uw},0)$. By the definition of shattering, each $(p_{uw}, t) \in S$ must appear in $2^{k-1}$ different ranges in $\mathcal{F}^+$. On the other hand, each pair $(p_{uw},t)$ is in at most $|p_{uw}|-2$ ranges in $\mathcal{F}^+$, since $(p_{uw},t) \notin R_v$ either when $t > f_v(p_{uw})$ or $v \notin Int(p_{uw})$. Considering that $|p_{uw}|-2 \leq diam(G)-2$,  we have
    \[2^{k-1} \leq |p_{uw}|-2 \leq diam(G)-2\]
    \[k-1 \leq \lg (diam(G)-2).\]
    Since $k$ must be integer, $k \leq \lfloor \lg diam(G)-2 \rfloor + 1 \leq \lg (diam(G)-2) + 1$. Finally,
    \[PD(\mathcal{F}) = VCDim(\mathcal{F}^+) = k \leq \lfloor \lg diam(G)-2 \rfloor+1.\]
\end{proof}

By Theorem \ref{theo:expec} and Definition 3, $L_U(f_v) = p(v)$ and $L_S(f_v) = \tilde{p}(v)$, respectively, for each $v \in V$ and $f_v \in \mathcal{F}$. Thus, $|L_S(f_v) - L_U(f_v)| = |\tilde{p}(v) - p(v)|$, and by Theorems \ref{teo:esample} and \ref{theo:pseudoperc}, we have that a sample of size $\lceil\frac{c}{\epsilon^2}\left( \lfloor \lg diam(G) - 2 \rfloor + 1 -\ln \delta \right)\rceil$ suffices to our algorithm, for given $0 < \epsilon,\delta \leq 1$. The problem of computing the diameter of $G$ is not known to be easier than the problem of computing all of its shortest paths \cite{Aingworth:1996:FED:313852.314117}, so obtaining an exact value for the diameter would defeat the whole purpose of using a sampling strategy that avoids computing all shortest paths. Hence, we use an 2-approximation for the diameter described in \cite{Riondato2016}. We note that the diameter can be approximated within smaller factors, but even for a $(\frac{3}{2},\frac{3}{2})$-approximation algorithm (i.e., an algorithm that outputs a solution of size at most $\frac{3}{2} \cdot diam(G) + \frac{3}{2}$) the complexity is $\tilde{\mathcal{O}}(m\sqrt{n} + n^2)$ \cite{Aingworth:1996:FED:313852.314117}, what would also be a bottleneck to our algorithm. Furthermore, since in our case we do not need the largest shortest path, but simply the value of the diameter, and we take logarithm of this value, the approximation of \cite{Riondato2016} is sufficient.

%--------------
\subsection{Algorithm description and analysis} \label{subsec:algorithm}

    Given a weighted graph $G = (V,E)$ and the percolation states $x_v$ for each $v \in V$ as well as the quality and confidence parameters $0 < \epsilon,\delta \leq 1$, assumed to be constants (they do not depend on the size of $G$) respectively, the Algorithm \ref{alg:percolation} works as follows. At the beginning of the execution the approximated value $diam(G)$ for the vertex-diameter of $G$ is obtained, in line 2, by a 2-approximation described in \cite{Riondato2016}, if the graph is undirected, as previously mentioned. We also compute this approximation in directed graphs ignoring the edges directions, which may not guarantee the approximation factor of 2, but it is good in practice as shown in the experimental evaluation (Section 5). According to Theorem \ref{theo:pseudoperc}, this value is used to determine the sample size, denoted by $r$, in line 3.
    %At the beginning of the execution, the sample $S$ and the vector $x'$ of percolation states of the vertices that will be in $S$ are initialized as $S = \emptyset$ and $x' = \emptyset$. The approximated value $diam(G)$ for the diameter of $G$ in line 4 is obtained by a 2-approximation described in {\color{red} citar Riondatto}, as previously mentioned. According to Theorem \ref{theo:pseudoperc}, this value determines the size of $S$, denoted by $r$, in line 5. 
    
    The value $minus\_s[v] = \sum\limits_{(f,d) \in V^2 : f \neq v \neq d} R(x_f-x_d)$ for each $v \in V$, which are necessary to compute $\tilde{p}(v)$, is obtained in line 5 by the linear time dynamic programming strategy presented in Algorithm \ref{alg:differences}. The correctness of Algorithm \ref{alg:differences} is not self evident, so we provide a proof of its correctness in Theorem \ref{theo:alg1}. 
    
    A pair $(u,w) \in V^2$ is sampled uniformly and independently, and then a shortest path $p_{uw}$ between $(u,w)$ is sampled uniformly in $S_{uw}$ in lines 10--16. For a vertex $z \in Int(p_{uw})$, the value $\frac{1}{r}\frac{R(x_u-x_w)}{minus\_s[v]}$ is added to $\tilde{p}(z)$.
    
    %While sampling $r$ pairs of vertices (with replacement) uniformly and independently and adding them to $S$, the algorithm saves in $x'$, with no replacement, the percolation states of the vertices in $S$ (lines 5--9). Then, $|x'| = n' \leq 2r$. Let $S_V \subseteq V$ be the set of vertices from $V$ such that $v \in S_V$ if $v = u$ or $v = w$ for a path $p_{uw}$ in $S$. The values $s = \sum\limits_{f,d \in S_V} R(x_f-x_d)$ and $minus\_v[v] = \sum\limits_{f,d \in S_V : f \neq v \neq d} R(x_f-x_d)$, for each $v \in V$, which are necessary to compute $\tilde{p}(v)$, are obtained in line 11 by the dynamic programming algorithm presented in Algorithm \ref{alg:differences}.
    
    %The value $s = \sum\limits_{f,d \in S} R(x_f-x_d)$ and the values $\sum\limits_{f,d \in S : f \neq v \neq d} R(x_f-x_d)$, stored in the vector $minus\_s$, for each $v$ that is in $S$, are obtained in line 12 by the dynamic programming algorithm presented in Algorithm \ref{alg:differences}. If a vertex $y$ does not belong to $S$, then the value in $s$ is not affected by the percolation state of $y$. Hence, $\sum\limits_{f,d \in S : f \neq y \neq d} R(x_f-x_d) = \sum\limits_{f,d \in S} R(x_f-x_d) = s$.

\begin{theorem}\label{theo:alg1}
    For an array $A$ of size $n$, sorted in non-decreasing order, Algorithm \ref{alg:differences} returns for $sum$ and $minus\_sum[k]$, respectively, the values $\sum\limits_{i=1}^n \sum\limits_{j = 1}^n  R(A[j]-A[i])$ and $\sum\limits_{\substack{i=1 \\ i \neq k}}^n \sum\limits_{\substack{j = 1 \\ j \neq k}}^n R(A[j]-A[i])$, for each $k \in \{1,\ldots,n\}$.
\end{theorem}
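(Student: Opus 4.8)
The plan is to exploit that $A$ is sorted in order to remove the truncation $R(\cdot)=\max\{\cdot,0\}$, reducing each target quantity to a sum over ordered pairs with a closed form that can be checked against the accumulators the algorithm maintains.

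First I would observe that since $A[1]\le A[2]\le\cdots\le A[n]$, for every pair of indices we have $R(A[j]-A[i])=A[j]-A[i]$ when $j\ge i$ and $R(A[j]-A[i])=0$ when $j<i$ (ties contribute $0$ either way, so including or excluding the diagonal is immaterial). Consequently the full double sum collapses to
\[
sum=\sum_{j=1}^{n}\sum_{i=1}^{j}\bigl(A[j]-A[i]\bigr)=\sum_{k=1}^{n}(2k-n-1)\,A[k],
\]
where the last equality follows by counting, for each $k$, the $(k-1)$ pairs in which $A[k]$ is the larger element and the $(n-k)$ pairs in which it is the smaller one. Equivalently, the incremental contribution of index $k$ is $(k-1)A[k]-P[k-1]$ with $P[k-1]=\sum_{i<k}A[i]$, and this is exactly what a single left-to-right pass maintaining a running prefix sum $P$ accumulates; I would use this to certify the value returned in $sum$.

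Next, for $minus\_sum[k]$ I would view $R(A[j]-A[i])$ as the $(i,j)$ entry of an $n\times n$ matrix; deleting index $k$ from both ranges removes precisely row $k$ and column $k$, whose unique shared entry $R(A[k]-A[k])=0$ is harmless. Hence
\[
minus\_sum[k]=sum-\Bigl(\sum_{j=1}^{n}R(A[j]-A[k])\Bigr)-\Bigl(\sum_{i=1}^{n}R(A[k]-A[i])\Bigr),
\]
and by sortedness the first (row) term equals $\sum_{j>k}(A[j]-A[k])=(T-P[k])-(n-k)A[k]$ and the second (column) term equals $\sum_{i<k}(A[k]-A[i])=(k-1)A[k]-P[k-1]$, where $T=\sum_{i}A[i]$. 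Each of these is computable in $O(1)$ per $k$ from the precomputed prefix sums and $T$, which accounts for the linear total running time.

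With the closed forms in hand, the remaining work is a loop-invariant argument: I would show by induction on the loop counter that after processing the first $k$ entries the algorithm's accumulators hold exactly $P[k]$, the partial value of $sum$, and whatever auxiliary running sums feed the row and column expressions, so that on termination $sum$ and every $minus\_sum[k]$ coincide with the formulas above. The main obstacle I anticipate is \emph{not} the algebra but faithfully matching the algorithm's concrete variables and update order to these quantities—in particular verifying that the prefix/suffix bookkeeping reproduces the row and column contributions consistently for every $k$, and confirming that the $\max\{\cdot,0\}$ truncation is correctly subsumed by the sortedness reduction at exactly the indices where the algorithm drops terms.
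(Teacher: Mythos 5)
Your proposal is correct and follows essentially the same route as the paper's proof: use sortedness to drop the $\max\{\cdot,0\}$ truncation, certify $sum$ via the per-index contribution $(k-1)A[k]-P[k-1]$ accumulated with a running prefix sum, and obtain $minus\_sum[k]$ by subtracting the row-$k$ and column-$k$ contributions $\sum_{j>k}(A[j]-A[k])$ and $\sum_{i<k}(A[k]-A[i])$ from $sum$. The only difference is that the paper carries out in full the final bookkeeping you defer to a loop invariant, explicitly solving the recurrences for $svp$ and $sum_k$ and simplifying the correction term to the expression $sum-(2k-n-2)A[k]-svp[n+1]+2\,svp[k]$ appearing on line 9 of the algorithm.
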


\begin{proof}
    By the definition of $sum$, we have that
    \[\begin{split}
    sum &= \sum_{i=1}^n \sum_{j=1}^n R(A[i]-A[j])\\ 
    &= \sum_{i=1}^n \sum_{j=1}^n R(A[j]-A[i])\\ 
    &= \sum_{i=1}^n \sum_{j=1}^n \max\{A[j]-A[i],0\}.
    \end{split}\]
    %Similarly, by the definition of $minus\_sum$, we have for all $k \in \{1,...,s\}$ 
    %$$minus\_sum[k] = \sum\limits_{\substack{i=1 \\ i \neq k}}^s \sum\limits_{\substack{j=1 \\ j \neq k}}^s R(A[j]-A[i]) = \sum\limits_{\substack{i=1 \\ i \neq k}}^s \sum\limits_{\substack{j=1 \\ j \neq k}}^s \max\{A[j]-A[i],0\}.$$
    
    Since $A$ is sorted, then $\max\{A[j]-A[i],0\} = 0$ if $j<i$. Hence, if we consider only the $j \geq i$, this value becomes
    \[sum = \sum_{i=1}^n \sum_{j=i}^n (A[j]-A[i]).\]
    
    A similar step can be applied to the values of the array $minus\_sum$, and then for all indices $k \in \{1,...,n\}$, 
    \[\scalemath{0.94}{
    minus\_sum[k] = \sum\limits_{\substack{i=1 \\ i \neq k}}^n \sum\limits_{\substack{j=1 \\ j \neq k}}^n \max\{A[j]-A[i],0\} =  \sum\limits_{\substack{i=1 \\ i \neq k}}^n \sum\limits_{\substack{j=i \\ j \neq k}}^n (A[j]-A[i]).
    }\]
    
    The recurrences below follow directly from lines 5 and 6, where $sum_k$ denotes the value of $sum$ at the beginning of the $k$-th iteration of the algorithm. 
    
  \[svp[k]=\begin{cases}
    0, & \text{if $k=1$}\\
     svp[k-1]+ A[k-1], & \text{otherwise}.
  \end{cases}
  \]
    
   \[sum_k = \begin{cases}
    0, & \text{if $k=1$}\\
     sum_{k-1} + (k-1)A[k] - svp[k], & \text{otherwise}.
  \end{cases}
   \]
  
    The solutions to the above recurrences are, respectively,
    \[svp[k] = \sum_{i=1}^{k-1} A[i] \quad \text{ and } \quad sum_k = \sum_{i=1}^{k}\left((i-1)A[i] - svp[i]\right).\]

    The value $sum$ is then correctly computed in lines 4--6, since
	\begingroup
    \allowdisplaybreaks
    \begin{gather*}
        sum = \sum_{i=1}^n \sum_{j=i}^n (A[j]-A[i]) = \sum_{i=1}^n \sum_{j=i}^n A[j] - \sum_{i=1}^n \sum_{j=i}^n A[i]\\
        = \sum_{i=1}^n \sum_{j=i}^n A[j] - \sum_{i=1}^n (n-i+1)A[i]\\
        = \sum_{j=1}^n \sum_{i=1}^j A[j] - \sum_{i=1}^n (n-i+1)A[i]\\
        = \sum_{j=1}^n jA[j] - \sum_{i=1}^n (n-i+1)A[i] = \sum_{i=1}^n iA[i] - \sum_{i=1}^n (n-i+1)A[i]\\
        = \sum_{i=1}^n (i-1)A[i]-\sum_{i=1}^n (n-i)A[i] = \sum_{i=1}^n (i-1)A[i]-\sum_{i=1}^n \sum_{j=1}^{i-1} A[j]\\
        = \sum_{i=1}^n \left( (i-1)A[i]-\sum_{j=1}^{i-1}A[j] \right) = \sum_{i=1}^n \left( (i-1)A[i]-svp[i] \right).
    \end{gather*}
    \endgroup
     
    Finally, $minus\_sum$ is also correctly computed in lines 8 and 9, since
    
    \begingroup
    \allowdisplaybreaks
    \begin{align*}
        & minus\_sum[k] = \sum\limits_{\substack{i=1 \\ i \neq k}}^n \sum\limits_{\substack{j=i \\ j \neq k}}^n (A[j]-A[i])\\
        &\scalemath{0.96}{= \sum_{i=1}^n \sum_{j=i}^n (A[j]-A[i]) - \left( \sum_{j=1}^{k-1}(A[k]-A[j]) + \sum_{j=k+1}^{n}(A[j]-A[k]) \right)}\\
        &= sum - \left( \sum_{j=1}^{k-1} A[k] - \sum_{j=k+1}^n A[k] - \sum_{j=1}^{k-1} A[j] + \sum_{j=k+1}^n A[j] \right)\\
        &\scalemath{0.93}{= sum - \Bigg( (k-1)A[k]-(n-(k+1)+1)A[k]- \sum_{j=1}^{k-1} A[j] + \sum_{j=k+1}^n A[j]\Bigg)}\\
        &\scalemath{0.96}{= sum - \left( (2k-n-1)A[k] + \sum_{j=1}^n A[j] - \sum_{j=1}^{k-1}A[j] - A[k] -\sum_{j=1}^{k-1}A[j] \right)}\\
        &= sum - \left( (2k-n-2)A[k] + \sum_{j=1}^n A[j] - 2\sum_{j=1}^{k-1} A[j] \right)\\
        &= sum - (2k-n-2)A[k] - svp[n+1] + 2svp[k].
    \end{align*}   
	\endgroup

 \end{proof}

\begin{algorithm}[!htbp]
		\SetAlgoNoLine
		\SetAlgoNoEnd
		\DontPrintSemicolon
		\KwData{Array $A$, sorted in non-decreasing order, and $n = |A|$.}
		\KwResult{The value $sum = \sum\limits_{i=1}^n \sum\limits_{j = 1}^n  R(A[j]-A[i])$ and the array $\{minus\_sum[k] = \sum\limits_{\substack{i=1 \\ i \neq k}}^n \sum\limits_{\substack{j = 1 \\ j \neq k}}^n R(A[j]-A[i]), \forall k \in \{1,...,n\}$\}, such that $R(z) = \max\{z,0\}$.}
		
		sum $\leftarrow 0$\;
		
		minus\_sum[i] $\leftarrow 0, \forall i \in \{1,\ldots,n\}$\;
	
	    svp $\leftarrow (0,0, \ldots, 0)$\; 
	    
	    \For{$i \leftarrow 2$ \KwTo $n$}{
			$\textrm{svp}[i] \leftarrow \textrm{svp}[i-1] + A[i-1]$\;
			
			$\textrm{sum} \leftarrow \textrm{sum} + (i-1)A[i] - \textrm{svp}[i]$\;
	    }
		
		$\textrm{svp}[n+1] \leftarrow \textrm{svp}[n] + A[n]$
	
	    \For{$i\leftarrow 1$ \KwTo $n$}{
		    $\textrm{minus\_sum}[i] \leftarrow \textrm{sum} - A[i](2i-n-2) - \textrm{svp}[n+1] + 2\textrm{svp}[i]$\; 
	    }
		
		\textbf{return }{sum, minus\_sum}
		\caption{\textsc{getPercolationDifferences($A,n$)}}
		\label{alg:differences}
	\end{algorithm}
    
    \begin{algorithm}[!htbp]
		\SetAlgoNoLine
		\SetAlgoNoEnd
		\DontPrintSemicolon	
		\KwData{Graph $G = (V,E)$ with $n = |V|$, percolation states $x$, accuracy parameter $0 < \epsilon \leq 1$, confidence parameter $0 < \delta \leq 1$.}
		
		\KwResult{Approximation $\tilde{p}(v)$ for the percolation centrality of all vertices $v \in V$.}
		
		% $S \leftarrow \emptyset$\;%, x' \leftarrow \emptyset$\;

	    $\tilde{p}[v] \leftarrow 0, \; minus\_s[v] \leftarrow 0 \text{, } \forall v \in V$\;
		
		$diam(G) \leftarrow \textsc{getVertexDiameter}(G)$\;
		
		$r \leftarrow \lceil\frac{c}{\epsilon^2}\left( \lfloor \lg diam(G) - 2 \rfloor + 1 -\ln \delta \right)\rceil $\;
		
		sort $x$ \tcc*[r]{after sorted, $x = (x_1, x_2, \ldots, x_n)$}
		
		$minus\_s \leftarrow \textsc{getPercolationDifferences}(x,n)$\;
		
		\For{$i \leftarrow 1 $ to $r$}{
		    sample $u \in V$ with probability $1/n$\;
		    
		    sample $w \in V$ with probability $1/(n-1)$\;
	
			$S_{uw} \leftarrow \textsc{allShortestPaths}(u,w)$\;
				
			\If{$S_{uw} \neq \emptyset$}{
				$t \leftarrow w$\\
					
				\While{$t \neq u$}{
					sample $z \in P_u(t)$ with probability $\frac{\sigma_{uz}}{\sigma_{ut}}$\;
					
					\If{$z \neq u$}{
					 % \If{$z \in S$}{
					  $\tilde{p}[z] \leftarrow \tilde{p}[z] + \frac{1}{r} \frac{R(x_u-x_w)}{minus\_s[z]}$\;
					  %}\Else{
					   %  $\tilde{p}(z) \leftarrow \tilde{p}(z) + \frac{1}{r} \frac{R(x_u - x_w)}{s}$
					  %}
					}
					
					$t \leftarrow z$\;
				}
			}
		}
		\textbf{return }{$\tilde{p}[v], \forall v \in V$}
		\caption{\textsc{PercolationCentralityApproximation($G$,$x$,$\epsilon$,$\delta$)}}
		\label{alg:percolation}
	\end{algorithm}

\begin{theorem}\label{theo:alg2}
    Let $S = \{p_{u_i w_i},\ldots,p_{u_r w_r}\}$ be a sample of size $r = \frac{c}{\epsilon^2}(\lfloor \lg diam(G) - 2 \rfloor + 1) - \ln \delta)$ for a given weighted graph $G = (V,E)$ and for given $0 < \epsilon,\delta \leq 1$. Algorithm \ref{alg:percolation} returns with probability at least $1-\delta$ an approximation $\tilde{p}(v)$ to $p(v)$, for each $v \in V$, such that $\tilde{p}(v)$ is within $\epsilon$ error.
\end{theorem}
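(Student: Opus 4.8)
The plan is to establish two things: first, that Algorithm \ref{alg:percolation} actually computes the empirical average $\tilde{p}(v) = L_S(f_v)$ for every $v \in V$; and second, that this empirical average lies within $\epsilon$ of $p(v) = L_U(f_v)$ simultaneously for all $v$ with probability at least $1 - \delta$. The second part is essentially a direct invocation of the machinery already assembled in the paper, so the bulk of the new work lies in the first part, namely verifying that the sampling loop implements the distribution $\pi$ and accumulates the correct quantities.

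For the sampling step, I would first argue that each iteration of the main loop draws a path $p_{uw} \in U$ exactly according to $\pi$. The pair $(u,w)$ is chosen with probability $\frac{1}{n}\cdot\frac{1}{n-1}$ in lines 7--8. The backward walk then selects a shortest path by repeatedly sampling a predecessor $z \in P_u(t)$ with probability $\sigma_{uz}/\sigma_{ut}$; the probability of producing a fixed path $u = z_0, z_1, \ldots, z_\ell = w$ is the telescoping product $\prod_{i=1}^{\ell} \sigma_{u z_{i-1}}/\sigma_{u z_i} = \sigma_{uu}/\sigma_{uw} = 1/\sigma_{uw}$, using $\sigma_{uu} = 1$. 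Hence each $p_{uw}$ is produced with probability $\frac{1}{n(n-1)}\frac{1}{\sigma_{uw}} = \pi(p_{uw})$, so the $r$ iterations yield an i.i.d. sample $S$ drawn according to $\pi$, matching the hypotheses of Theorem \ref{teo:esample}.

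Next I would verify the accumulation. By Theorem \ref{theo:alg1}, the call to \textsc{getPercolationDifferences} correctly sets $minus\_s[z] = \sum_{(f,d) \in V^2 : f \neq z \neq d} R(x_f - x_d)$ for every $z$. For each sampled path, the inner loop visits exactly the internal vertices $z \in Int(p_{uw})$ and adds $\frac{1}{r}\frac{R(x_u - x_w)}{minus\_s[z]}$ to $\tilde{p}[z]$, which is precisely $\frac{1}{r} f_z(p_{uw})$ by the definition of $f_z$ in Section \ref{sec:pseudo}; for a vertex not internal to the path the indicator $\mathds{1}_{\tau_z}$ vanishes and no contribution is made. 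Summing over all $r$ sampled paths gives $\tilde{p}(v) = \frac{1}{r}\sum_{p_{u_i w_i} \in S} f_v(p_{u_i w_i}) = L_S(f_v)$ for every $v$, as required.

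Finally, I would close the argument by combining the pseudo-dimension bound with the sample complexity theorem. By Theorem \ref{theo:pseudoperc}, $VCDim(\mathcal{R}') \leq d := \lfloor \lg diam(G) - 2 \rfloor + 1$, and the chosen sample size $r = \frac{c}{\epsilon^2}(d - \ln \delta) = \frac{c}{\epsilon^2}(d + \ln \frac{1}{\delta})$ is exactly the bound demanded by Theorem \ref{teo:esample}. Therefore $S$ is $\epsilon$-representative with probability at least $1 - \delta$, i.e.\ $|L_S(f_v) - L_U(f_v)| \leq \epsilon$ for all $f_v \in \mathcal{F}$ at once. Substituting $L_S(f_v) = \tilde{p}(v)$ from the accumulation step and $L_U(f_v) = p(v)$ from Theorem \ref{theo:expec} yields $|\tilde{p}(v) - p(v)| \leq \epsilon$ for every $v \in V$ with probability at least $1 - \delta$, which is the claim. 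I expect the main obstacle to be the first part: cleanly justifying that the backward predecessor walk induces the uniform distribution over $S_{uw}$ via the telescoping identity, and handling the boundary case where a sampled pair $(u,w)$ has $S_{uw} = \emptyset$, so that the distribution $\pi$ on $U$ is faithfully reproduced.
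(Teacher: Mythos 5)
Your proposal is correct and follows essentially the same route as the paper's own proof: verify that lines 7--16 sample each path with probability $\pi(p_{uw}) = \frac{1}{n(n-1)}\frac{1}{\sigma_{uw}}$ and accumulate $L_S(f_v)$ using the $minus\_s$ values from Theorem \ref{theo:alg1}, then combine Theorems \ref{theo:expec}, \ref{theo:pseudoperc} and \ref{teo:esample} to conclude. The only difference is that you spell out the telescoping product $\prod_{i=1}^{\ell} \sigma_{u z_{i-1}}/\sigma_{u z_i} = 1/\sigma_{uw}$ for the backward walk, whereas the paper delegates that step to Lemma 5 of Riondato and Kornaropoulos; your version is self-contained but otherwise identical in substance.
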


%alane: inserir este link nas referencias https://link.springer.com/content/pdf/10.1007%2Fs10618-015-0423-0.pdf
\begin{proof}
    Each pair $(u_i,w_i)$ is sampled with probability $\frac{1}{n(n-1)}$ in lines 7 and 8, and for each pair, the set $S_{u_i w_i}$ is computed by %a breadth-first-search if the graph is unweighted, and by 
    Dijkstra algorithm (line 9). A shortest path $p_{u_i w_i}$ is sampled independently and uniformly in $S_{u_i w_i}$ (lines 10--16), i.e., with probability $\frac{1}{\sigma_{u_i w_i}}$, by a backward traversing starting from $w_i$ (Lemma 5 in \cite{Riondato2016}, Section 5.1). Therefore, $p_{u_i w_i}$ is sampled with probability $\frac{1}{n(n-1)}\frac{1}{\sigma_{u_i w_i}}$.
    
    In lines 12--16, each $z \in p_{u_i w_i}$ reached by the backward traversing have its value increased by $\frac{1}{r} \frac{R(x_{u_i} - x_{w_i})}{minus\_s[z]}$. The value of $minus\_s[z]$ is correctly computed as shown in Theorem \ref{theo:alg1}. Let $S' \subseteq S$ be the set of shortest paths that $z$ is an internal vertex. Then, at the end of the $r$-th iteration, %if $z \in S$ 
    \[\begin{split}
    \tilde{p}(z)
    &= \frac{1}{r} \sum\limits_{p_{gh} \in S'} \frac{R(x_{g} - x_{h})}{\sum\limits_{\substack{(f,d) \in V^2 \\ f \neq z \neq d}} R(x_f - x_d)}\\ 
    &= \frac{1}{r} \sum\limits_{p_{u_i w_i} \in S} \frac{R(x_{u_i} - x_{w_i})}{\sum\limits_{\substack{(f,d) \in V^2 \\ f \neq z \neq d}} R(x_f - x_d)} \mathds{1}_{\tau_v}(p_{u_i w_i})
    \end{split}\]
    %and if $z \notin S$, then
    %$$\tilde{p}(z) = \frac{1}{r} \sum\limits_{p_{u_l w_l} \in S'} \frac{R(x_{u_l} - x_{w_l})}{\sum\limits_{\substack{p_{u_j w_j} \in S \\ u_j \neq w_j}} R(x_{u_j} - x_{w_j})} = \frac{1}{r} \sum\limits_{p_{u_i w_i} \in S} \frac{R(x_{u_i} - x_{w_i})}{\sum\limits_{\substack{p_{u_j w_j} \in S \\ u_j \neq w_j}} R(x_{u_j} - x_{w_j})} \mathds{1}_{\tau_v}(p_{u_i w_i})$$
    which corresponds to $\tilde{p}(z) = \frac{1}{r}\sum\limits_{\substack{p_{u_i w_i} \in S}} f_z(p_{u_i w_i}).$ Since $L_S(f_v) = \tilde{p}(v)$ and $L_U(f_v) = p(v)$ (Theorem \ref{theo:expec}) for all $v \in V$ and $f_v \in \mathcal{F}$, and $S$ is a sample such that $|S| = r$, then $\Pr(|\tilde{p}(v)-p(v)| \leq \epsilon) \geq 1-\delta$ (Theorem \ref{teo:esample}).
\end{proof}

%VER NA REUNIÃO
\begin{theorem}\label{theo:time}
    Given a weighted graph $G=(V,E)$ with $n = |V|$ and $m = |E|$ and a sample of size $r = \frac{c}{\epsilon^2}(\lfloor \lg diam(G) - 2 \rfloor + 1) - \ln \delta)$, Algorithm 2 has running time %$\mathcal{O}(r(n+m))$ if $G$ is unweighted 
    $\mathcal{O}(r(m \log n))$.
\end{theorem}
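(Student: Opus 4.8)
The plan is to account for the running time of Algorithm~\ref{alg:percolation} line by line, separating the one-time preprocessing outside the main loop from the cost of a single iteration of the \textbf{for} loop in lines 6--16, and then argue that the loop dominates. First I would observe that the initialization in line~1 is $\mathcal{O}(n)$, the diameter approximation in line~2 (the 2-approximation of \cite{Riondato2016}) amounts to a single-source shortest-path computation costing $\mathcal{O}(m \log n)$, computing $r$ in line~3 is $\mathcal{O}(1)$, sorting the percolation states in line~4 is $\mathcal{O}(n \log n)$, and by Theorem~\ref{theo:alg1} the call to \textsc{getPercolationDifferences} in line~5 runs in $\mathcal{O}(n)$ time. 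Since $m \geq n-1$ for a connected graph and $r \geq 1$, all of these preprocessing costs are dominated by a single iteration of the main loop, so it suffices to bound one iteration by $\mathcal{O}(m \log n)$ and multiply by $r$.

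For a single iteration, sampling $u$ and $w$ in lines 7--8 is $\mathcal{O}(1)$. The call \textsc{allShortestPaths}$(u,w)$ in line~9 is implemented by running Dijkstra from $u$, which simultaneously computes the shortest-path distances, the counts $\sigma_{uw}$ of shortest paths, and the predecessor sets $P_u(\cdot)$ needed for the subsequent sampling; with a binary heap this costs $\mathcal{O}(m \log n)$. The backward traversal in lines 10--16 follows a single shortest path from $w$ to $u$, and at each visited vertex $t$ it samples a predecessor $z \in P_u(t)$ with probability $\sigma_{uz}/\sigma_{ut}$, which requires scanning $P_u(t)$. The key observation is that each edge of $G$ occurs in at most one predecessor relationship of the shortest-path DAG rooted at $u$, so $\sum_{t \in V} |P_u(t)| = \mathcal{O}(m)$; since the path visits distinct vertices, the total scanning work of the traversal is at most $\sum_{t \in V} |P_u(t)| = \mathcal{O}(m)$, dominated by the Dijkstra call. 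The update of $\tilde{p}[z]$ in line~15 is $\mathcal{O}(1)$ per visited vertex and hence $\mathcal{O}(diam(G)) = \mathcal{O}(n)$ over the whole path, again dominated.

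Putting these together, a single iteration costs $\mathcal{O}(m \log n)$, and the loop runs $r$ times, giving a total running time of $\mathcal{O}(r(m \log n))$ as claimed. The step I expect to require the most care is the bound on the backward traversal: although a sampled path may pass through vertices of large in-degree, one must argue via the total size of the predecessor structure---rather than naively multiplying the path length by the maximum degree---that the aggregate cost is only $\mathcal{O}(m)$ and therefore never exceeds the cost of the Dijkstra computation on which it piggybacks.
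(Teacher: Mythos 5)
Your proposal is correct and follows essentially the same route as the paper's proof: one-time preprocessing (diameter approximation, sorting, Algorithm~\ref{alg:differences}) is dominated by the main loop, each iteration is charged $\mathcal{O}(m \log n)$ for Dijkstra, and the backward traversal is bounded in aggregate by the total size of the predecessor sets, i.e., $\mathcal{O}(m)$ (the paper phrases this as $\sum_{v \in V} d_G(v) = 2m$, which is the same bound you obtain from the shortest-path DAG). No substantive difference.
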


\begin{proof}
    We use the linear time algorithm of \cite{vose1991} for the sampling step in lines 7, 8 and 13. The upper bound to the vertex-diameter, computed in line 2 and denoted by $diam(G)$, is obtained by a 2-approximation of \cite{Riondato2016}, which runs in time %$\mathcal{O}(n+m)$ for unweighted $G$ and 
    $\mathcal{O}(m \log n)$. %otherwise. 
    
    %The value of $minus\_s[v]$, for each $v \in V$, is obtained in time $\mathcal{O}(n \log n)$, which corresponds to the sorting of the percolation states array $x$ (line 5), that can be done in $\mathcal{O}(n \log n)$ running time, and to the execution of Algorithm 1 on sorted $x$ (line 6), that has running time $\mathcal{O}(n)$. 
    
    Sorting the percolation states array $x$ (line 4) can be done in $\mathcal{O}(n \log n)$ time and to the execution of Algorithm 1 on the sorted array $x$ (line 5) has running time $\mathcal{O}(n)$. 
    %Let $T_{uw}$ be the running time of the loop in 7--17, which is executed $r$ times. We have that $T_{uw} = \mathcal{O}(rn)$, since sampling $u$ and $w$ and filling the array of percolation states $x'$ has linear time. Note the running time of Algorithm 1 is $\mathcal{\Theta}(n)$. Then, for $T_S$ denoting the running time of lines 10 and 11, and considering that $|x'| \leq 2r$, then $T_S = \mathcal{O}(r \log r)$.
    As for the loop in lines 12--16, the complexity analysis is as follows. Once $|P_u(w)| \leq d_G(w)$, where $d_G(w)$ denotes the degree of $w$ in $G$, and since this loop is executed at most $n$ times if the sampled path traverses all the vertices of $G$, the total running time of these steps corresponds to $\sum_{v \in V} d_G(v) = 2m = \mathcal{O}(m)$. 
    
    The loop in lines 6--16 runs $r$ times and the Dijkstra algorithm which is executed in line 9 has running time $\mathcal{O}(m \log n)$, so the total running time of Algorithm 2 is $\mathcal{O}(n \log n + r\max(m,m \log n)) = \mathcal{O}(n \log n + r(m \log n)) = \mathcal{O}(r(m \log n))$. %$T_P = \mathcal{O}(r\max(T_p, m \log n))$. 
    
    %The total running time of Algorithm 2, then, is $$\mathcal{O}(r\max(T_p,T_P)) = \mathcal{O}(r\max(m,m \log n)) = \mathcal{O}(r(m \log n)).$$ %which corresponds to $\mathcal{O}(r(n+m))$ in unweighted graphs and $\mathcal{O}(r(m \log n))$ in weighted graphs. %VER NA REUNIÃO SE ESTÁ CERTO
\end{proof}

\begin{corollary}\label{cor:a}
    Given an unweighted graph $G=(V,E)$ with $n = |V|$ and $m = |E|$ and a sample of size $r = \frac{c}{\epsilon^2}(\lfloor \lg diam(G) - 2 \rfloor + 1) - \ln \delta)$, Algorithm 2 has running time %$\mathcal{O}(r(n+m))$ if $G$ is unweighted 
    $\mathcal{O}(r(m + n))$.
\end{corollary}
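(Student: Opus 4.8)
The plan is to retrace the proof of Theorem~\ref{theo:time} and isolate the only two places where the weighted analysis invokes an $\mathcal{O}(m\log n)$ subroutine, replacing each by its unweighted $\mathcal{O}(m+n)$ counterpart; every other cost estimate carries over verbatim. In particular, the sampling of $u$, $w$ and of the predecessors $z$ (lines 7, 8 and 13) still uses the linear-time method of \cite{vose1991}, the sorting of $x$ (line 4) is still $\mathcal{O}(n\log n)$, the call to Algorithm~\ref{alg:differences} (line 5) is still $\mathcal{O}(n)$, and the backward traversal in lines 12--16 still costs $\sum_{v\in V} d_G(v) = 2m = \mathcal{O}(m)$ per execution of the outer loop, exactly as before.

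First I would replace the shortest-path computation in line 9: for an unweighted graph the set $S_{uw}$ and the path counts $\sigma_{uw}$ required by the backward sampling are produced by a breadth-first search from $u$ instead of Dijkstra's algorithm, lowering the cost of line 9 from $\mathcal{O}(m\log n)$ to $\mathcal{O}(m+n)$ per iteration. Second I would replace the diameter estimate in line 2: since in an unweighted graph the eccentricity of any single vertex is within a factor $2$ of $diam(G)$ by the triangle inequality, a single breadth-first search already yields the $2$-approximation required in line 3, in $\mathcal{O}(m+n)$ time rather than the $\mathcal{O}(m\log n)$ of the Dijkstra-based estimate of \cite{Riondato2016}. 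As in Theorem~\ref{theo:time}, only the numerical value of this approximate diameter is needed, so this coarse approximation suffices.

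Putting these together, a single pass of the main loop (lines 6--16) now costs $\mathcal{O}(m+n)$ for the breadth-first search plus $\mathcal{O}(m)$ for the traversal, i.e.\ $\mathcal{O}(m+n)$, so the $r$ iterations contribute $\mathcal{O}(r(m+n))$. Adding the preprocessing --- $\mathcal{O}(m+n)$ for the diameter, $\mathcal{O}(n\log n)$ for the sort, and $\mathcal{O}(n)$ for Algorithm~\ref{alg:differences} --- the overall running time is $\mathcal{O}(n\log n + r(m+n))$.

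The final and only delicate step is to collapse $\mathcal{O}(n\log n + r(m+n))$ into the stated $\mathcal{O}(r(m+n))$, which is where I expect the main obstacle to lie. Unlike the weighted case of Theorem~\ref{theo:time}, where the per-iteration term $m\log n$ already carries a $\log n$ factor and so dominates the sort whenever $m = \Omega(n)$, here the per-iteration term $m+n$ has no such factor, so absorbing $n\log n$ requires $r(m+n) = \Omega(n\log n)$. Under the standing assumption that $G$ is connected (so $m = \Omega(n)$) this reduces to $r = \Omega(\log n)$, which holds in the regime where $diam(G)$ grows polynomially in $n$. I would make this hypothesis explicit; alternatively one may keep the honest bound $\mathcal{O}(n\log n + r(m+n))$, which for fixed $\epsilon,\delta$ (whence $r = \mathcal{O}(\log n)$ since $diam(G) \le n$) is $\mathcal{O}((m+n)\log n)$, matching the complexity announced for unweighted graphs.
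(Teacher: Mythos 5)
Your proposal is correct and follows essentially the same route as the paper: the paper's proof of Corollary~\ref{cor:a} is a single sentence asserting that the analysis of Theorem~\ref{theo:time} carries over with BFS replacing Dijkstra in line~9, which is exactly your first substitution. Your second substitution (computing the diameter estimate by a single BFS) and, more importantly, your careful treatment of the $\mathcal{O}(n\log n)$ sorting term are refinements the paper does not spell out. You are right that the honest bound is $\mathcal{O}(n\log n + r(m+n))$ and that, unlike the weighted case where the per-iteration cost $m\log n$ absorbs the sort whenever $m=\Omega(n)$, here the absorption into $\mathcal{O}(r(m+n))$ requires $r(m+n)=\Omega(n\log n)$, which fails for graphs of small diameter (where $r$ is essentially constant) --- so either the extra hypothesis you name or the weaker bound $\mathcal{O}(n\log n + r(m+n))$ should be kept. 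This does not affect the headline claim $\mathcal{O}((m+n)\log n)$ for fixed $\epsilon,\delta$, since the sort is itself $\mathcal{O}((m+n)\log n)$, but your flag on the literal statement of the corollary is legitimate.
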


\begin{proof}
    The proof is analogous to the one of Theorem \ref{theo:time}, with the difference that the shortest paths between a sampled pair $(u,w) \in V^2$ will be computed by the BFS algorithm, which has running time $\mathcal{O}(m+n)$.
\end{proof}

We observe that, even though it is an open problem whether there is a $\mathcal{O}(n^{3-c})$ algorithm for computing all shortest paths in weighted graphs, in the unweighted case there is a $\mathcal{O}(n^{2.38})$ (non-combinatorial) algorithm for this problem \cite{SEIDEL1995400}. However,  even if this algorithm could be adapted to compute betweenness/percolation centrality (what is not clear), our algorithm obtained in Corollary \ref{cor:a} is still faster.

\section{Experimental evaluation}

In Section 5.1 we perform an experimental evaluation of our approach using real world networks. Since, as far as we know, our algorithm is the first estimation algorithm for percolation centrality, we compare our results with the best known algorithm for computing the same measure in the exact case. As expected, our algorithm is many times faster than the exact algorithm, since the exact computation of the percolation centrality takes $\mathcal{O}(n^3)$ time, which is extremely time consuming. Additionally, a main advantage of our algorithm is that it outputs an estimation with a very small error. In fact, for every one of the networks used in our experiment, the average estimation error are kept below the quality parameter $\epsilon$ by many orders of magnitude. For instance, even for $\epsilon = 0.1$ (the largest in our experiments), the average estimation error is in the order of $10^{-11}$ and maximum estimation error of any one of the centrality measure is in the order of $10^{-9}$.

Additionally, we also run experiments using synthetic graphs in order to validate the scalability of our algorithm, since for this task we need a battery of ``similar'' graphs of increasing size. We use power law graphs generated by the Barabasi-Albert model \cite{Barabasi1999} for such experiments. 

We use Python 3.8 language in our implementation and, for graph manipulation, we use the NetworkX library \cite{hagberg2008}. The implementation of the exact algorithm for the percolation centrality is the one available in referred NetworkX library. The experiments were performed on a 2.8 Mhz Intel i7-4900MQ quad core with 12GB of RAM and Windows 10 64-bit operating system.

In all experiments, we set the percolation state $x_v$, for each $v \in V$, as a random number between 0 and 1. The parameters $\delta$ and $c$ remained fixed. We set $\delta = 0.1$ for all experiments. 
%because modifying this value has little impact on the number $r$ of samples as it is subject to the logarithm function. In addition, we set 
and $c = 0.5$ as suggested by Löffler and Phillips (2009) \cite{loffler2009shape}.

\subsection{Real-world graphs}

In this section we describe the experimental results obtained in our work. We use graphs from the {\it Stanford Large Network Dataset Collection} \cite{snapnets}, which are publicly available real world graph datasets. These spans graphs from social, peer-to-peer, autonomous systems and collaboration networks. The details of the datasets can be seen in Table \ref{tab:datasets}. In this table, $\widetilde{VD}(G)$ is the upper bound for the vertex-diameter of the graph obtained in the execution of our algorithm. The last two columns in the same table show by which factor our algorithm is faster than the algorithm for computing the exact centrality. We compute this factor by dividing the running time of the exact method by the running time of our algorithm. We run our algorithm five times and show in the table minimum and the maximum factors. Note that for the largest graph our algorithm is around 30 times faster than the exact method. For this table, the running time of our algorithm is taken for $\epsilon = 0.04$. 

In our experimental evaluation, for every graph, we also run our algorithm for different values of $\epsilon$, more precisely set to 0.04, 0.06, 0.08 and 0.1. In Table \ref{tab:time} we show how many times our algorithm is faster than the exact method in the average of the five runs. 

The error of the estimation computed by our algorithm is within $\epsilon$ for every vertex of every graph even though this guarantee could possibly fail with probability $\delta = 0.1$. However, in addition to the better confidence results than the theoretical guarantee, the most surprising fact is that for every graph used in the experiment the maximum error among the error for the estimation of every vertex is around $10^{-9}$ and average error among all vertices is around $10^{-11}$, even when we set the quality guarantee to $\epsilon = 10^{-1}$. We run our algorithm five times for every graph for each value of $\epsilon$. Them maximum error is taken among all five runs. These results are shown in Figures \ref{fig:errors-avg}, \ref{fig:times} and \ref{fig:errors-max}.

%%%%%%%%%%%%%%%%%%%%%%%%%%%%%%%%%%%%

\begin{table*}[!htbp]
\begin{tabular}{ccccccc}
\toprule
\multicolumn{1}{l}{} & \multicolumn{1}{l}{} & \multicolumn{1}{l}{} & \multicolumn{1}{l}{} & \multicolumn{1}{l}{}       & \multicolumn{2}{c}{\textbf{$\frac{\text{exact time}}{\text{approx. time}}$}} \\
\hline
\textbf{Graph}       & \textbf{Type}        & \textbf{$|V|$}       & \textbf{$|E|$}       & \textbf{$\widetilde{VD}(G)$} & \textbf{Min}                  & \multicolumn{1}{l}{\textbf{Max}}                 \\
\midrule
wiki-Vote            & Directed             & 7115                 & 103689     & 11     &     10.21                         & 10.49                                             \\
p2p-Gnutella08         & Directed             & 6301                 & 20777                & 14                             & 10.23                         & 9.5                                               \\
oregon1-010331       & Undirected           & 10670                & 22002                &   14                         & 17.72                         & 17.6                                              \\
ca-CondMat           & Undirected           & 23133                & 93497                &        21                    & 31.19                         & 29.44                                          \\ 
asas20000102          & Undirected, weighted         & 6474                & 13895                &        14                    & 9.18                         & 9.72                                        \\ 
\bottomrule                                       
\end{tabular}
\caption{\label{tab:datasets}Datasets details for the real-world graphs.}
\end{table*}

%%%%%%%%%%%%%%%%%%%%
\begin{table}[!htbp]
\begin{tabular}{ccccc}
\hline
\multicolumn{5}{c}{\textbf{exact/approx. time average ratio}}                 \\
\hline
\textbf{Graph} & \textbf{$\epsilon=0.04$} & \textbf{$\epsilon=0.06$} & \textbf{$\epsilon=0.08$} & \textbf{$\epsilon=0.1$} \\
\hline
wiki-Vote      & 10.36         & 22.17         & 36.93         & 52.71        \\
p2p-Gnutella   & 9.92          & 19.59         & 33.91         & 53.21        \\
oregon1-010331 & 17.66         & 39.7          & 70.14         & 108.71       \\
ca-CondMat     & 30.25         & 67.32         & 120.98        & 191.1        \\ 
as20000102     & 9.61         & 20.67         & 38.81        & 55.4        \\ 
\bottomrule
\end{tabular}
\caption{\label{tab:time} Ratio for average running time after 5 runs of the percolation centrality estimation algorithm and the exact algorithm.}
\end{table}

%%%%%%%%%%%%%%%%%%%%%%%%%%%%%%%
% absolute estimated errors
%%%%%%%%%%%%%%%%%%%%%%%%%%%%%%

\begin{figure}[!htbp]
\begin{tikzpicture}[baseline]
\begin{axis}[
    title={Undirected graphs (avg, avg. + std.)},
    xlabel={Parameter $\epsilon$},
    ylabel={Absolute estimated error},
    xtick={0.04,0.06,0.08,0.1},
    %ytick={0,20,40,60,80,100,120},
    %legend pos=outer north east,
    legend pos=north west,
    legend style={font=\fontsize{8}{2}\selectfont},
    ymajorgrids=true,
    grid style=dashed,
    legend entries={oregon-1 (avg), ca-CondMat (avg), as20000102 (avg), oregon-1 (avg+std), ca-CondMat (avg+std), as20000102 (avg+std)},
    scaled x ticks = false,
    x tick label style={rotate=45,anchor=east,/pgf/number format/fixed},   
]
 
\addplot[color=blue, mark=square]
    coordinates {(0.04,0.00000000000101638005)(0.06,0.00000000000143485533)(0.08,0.00000000000171465186)(0.1,0.00000000000185326010)};

\addplot[color=red, mark= triangle]
    coordinates {(0.04,0.00000000000030559745)(0.06,0.00000000000038643671)(0.08,0.00000000000043599531)(0.1,0.00000000000048109812)};
    
\addplot[color=purple, mark=square]
    coordinates {(0.04,0.00000000000454905824)(0.06,0.00000000000590264034)(0.08,0.00000000000745477663)(0.1,0.00000000000813738920)};
    
\addplot[color=blue, mark= x]
    coordinates {(0.04,0.00000000000814867520)(0.06,0.00000000001234523001)(0.08,0.00000000001670834819)(0.1,0.00000000002004271066)};
    
\addplot[color=red, mark= *]
    coordinates {(0.04,0.00000000000157162092)(0.06,0.00000000000219787193)(0.08,0.00000000000284095346)(0.1,0.00000000000359368372)};
    
\addplot[color=green, mark=triangle]
    coordinates {(0.04,0.00000000003212459968)(0.06,0.00000000004690706174)(0.08,0.00000000006146947796)(0.1,0.00000000007444463698)};

\end{axis}
\end{tikzpicture}
\hskip 0pt % insert a non-breaking space of specified width.
\begin{tikzpicture}[baseline]
\begin{axis}[
    title={Directed graphs (avg, avg. + std.)},
    xlabel={Parameter $\epsilon$},
    ylabel={Absolute estimated error},
    xtick={0.04,0.06,0.08,0.1},
    legend pos=north west,
    legend style={font=\fontsize{8}{2}\selectfont},
    ymajorgrids=true,
    grid style=dashed,
    legend entries={p2p-Gnutella08 (avg), wiki-Vote (avg), p2p-Gnutella08 (avg+std), wiki-Vote (avg+std)},
    scaled x ticks = false,
    x tick label style={rotate=45,anchor=east,/pgf/number format/fixed},   
]
 
\addplot[color=blue, mark=square]
    coordinates {(0.04,0.00000000000722717993)(0.06,0.00000000000912665191)(0.08,0.00000000001044590006)(0.1,0.00000000001193180382)};

\addplot[color=red, mark= triangle]
    coordinates {(0.04,0.00000000000030559745)(0.06,0.00000000000038643671)(0.08,0.00000000000043599531)(0.1,0.00000000000048109812)};
    
\addplot[color=blue, mark= x]
    coordinates {(0.04,0.00000000003390526320)(0.06,0.00000000004600142698)(0.08,0.00000000005759188486)(0.1,0.00000000007335331008)};
    
\addplot[color=red, mark= *]
    coordinates {(0.04,0.00000000001139021297)(0.06,0.00000000001573088543)(0.08,0.00000000002090518039)(0.1,0.00000000002230077515)};

\end{axis}
\end{tikzpicture}
\caption{Percolation centrality absolute error estimation.}
\label{fig:errors-avg}
\end{figure}

%%%%%%%%%%%%%%%%%%%%%%%%%%%%%%%%%%
% running time
%%%%%%%%%%%%%%%%%%%%%%%%%%%%%%%%%
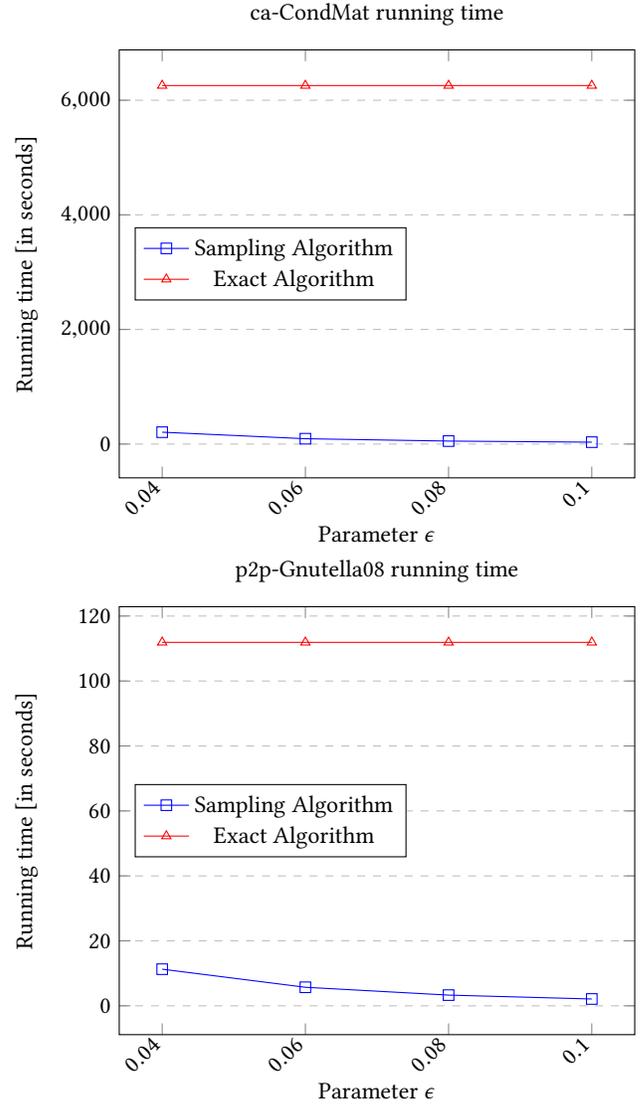
\begin{figure}[!htbp]
\begin{tikzpicture}[baseline]
\begin{axis}[
    title={ca-CondMat running time},
    xlabel={Parameter $\epsilon$},
    ylabel={Running time [in seconds]},
    xtick={0.04,0.06,0.08,0.10},
    legend style={at={(0.03,0.5)},anchor=west},
    ymajorgrids=true,
    grid style=dashed,
    legend entries={Sampling Algorithm, Exact Algorithm},
    scaled x ticks = false,
    x tick label style={rotate=45,anchor=east,/pgf/number format/fixed},   
]
 
\addplot[color=blue, mark=square]
    coordinates {(0.04,206.8007974088001)(0.06,92.93150820160008)(0.08,51.71349199519991)(0.1,32.75314521399978)};

\addplot[color=red, mark=triangle]
    coordinates {(0.04,6256.0497520809995)(0.06,6256.0497520809995)(0.08,6256.0497520809995)(0.1,6256.0497520809995)};

\end{axis}
\end{tikzpicture}
\hskip 0pt % insert a non-breaking space of specified width.
\begin{tikzpicture}[baseline]
\begin{axis}[
    title={p2p-Gnutella08 running time},
    xlabel={Parameter $\epsilon$},
    ylabel={Running time [in seconds]},
    xtick={0.04,0.06,0.08,0.1},
    legend style={at={(0.03,0.5)},anchor=west},
    ymajorgrids=true,
    grid style=dashed,
    legend entries={Sampling Algorithm, Exact Algorithm},
    scaled x ticks = false,
    x tick label style={rotate=45,anchor=east,/pgf/number format/fixed},   
]
 
\addplot[color=blue, mark=square]
    coordinates {(0.04,11.2748659226)(0.06,5.7107732584000015)(0.08,3.299517869399989)(0.1,2.102881264199999)};

\addplot[color=red, mark=triangle]
    coordinates {(0.04,111.894270283)(0.06,111.894270283)(0.08,111.894270283)(0.1,111.894270283)};
    
\end{axis}
\end{tikzpicture}

\caption{Average running time for 5 runs of Algorithm \ref{alg:percolation}}
\label{fig:times}
\end{figure}

%%%%%%%%%%%%%%%%%%%%%%%%%%%%%%%%%%%%%%%%
% maximum error
%%%%%%%%%%%%%%%%%%%%%%%%%%%%%%%%%%%%%%%
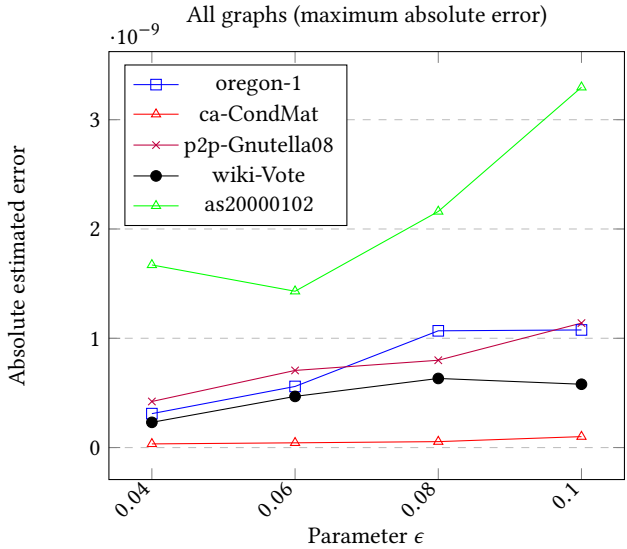
\begin{figure}[!htbp]
\begin{tikzpicture}[baseline]
\begin{axis}[
    title={All graphs (maximum absolute error)},
    xlabel={Parameter $\epsilon$},
    ylabel={Absolute estimated error},
    xtick={0.04,0.06,0.08,0.1},
    %ytick={0,20,40,60,80,100,120},
    %legend pos=outer north east,
    legend pos=north west,
    ymajorgrids=true,
    grid style=dashed,
    legend entries={oregon-1, ca-CondMat, p2p-Gnutella08, wiki-Vote, as20000102},
    scaled x ticks = false,
    x tick label style={rotate=45,anchor=east,/pgf/number format/fixed},   
]
 
\addplot[color=blue, mark=square]
    coordinates {(0.04,0.00000000031040146402)(0.06,0.00000000055921546725)(0.08,0.00000000106830154488)(0.1,0.00000000107617891932)};

\addplot[color=red, mark= triangle]
    coordinates {(0.04,0.00000000003357480632)(0.06,0.00000000004343663394)(0.08,0.00000000005435797734)(0.1,0.00000000010017460702)};
    
\addplot[color=purple, mark= x]
    coordinates {(0.04,0.00000000042160563055)(0.06,0.00000000070572903943)(0.08,0.00000000079899723150)(0.1,0.00000000113890078961)};
    
\addplot[color=black, mark= *]
    coordinates {(0.04,0.00000000023084031958)(0.06,0.00000000046823092958)(0.08,0.00000000063238826804)(0.1,0.00000000057950267240)};
    
\addplot[color=green, mark=triangle]
    coordinates {(0.04,0.00000000167112844219)(0.06,0.00000000143029174139)(0.08,0.00000000216030877219)(0.1,0.00000000329659545586)};

\end{axis}
\end{tikzpicture}

\caption{Percolation centrality absolute maximum error estimation.}
\label{fig:errors-max}
\end{figure}

%%%%%%%%%%%%%%%%

\subsection{Synthetic graphs}

In the scalability experiments, we used a sequence of synthetic graphs increasing in size and compared the execution time of our estimating algorithm with the algorithm for the computation of the exact centrality provided by NetworkX library. 

We also use the same library for generating random power law graphs, by the Barabasi-Albert model \cite{Barabasi1999} with each vertex creating two edges, obtaining power law graphs with average degree of $2$. 
In the experiments we use graphs with the number of vertices $n$ in $\{100, 250, 500, 1000, 2000, 4000, 8000, 16000\}$. The value of $\epsilon$ is fixed at $0.05$. The results are shown in Figure \ref{fig:scalability-random}.

%%%%%%%%%%%%%%%%%
% RANDOM STARTER 
%%%%%%%%%%%%%%%%%

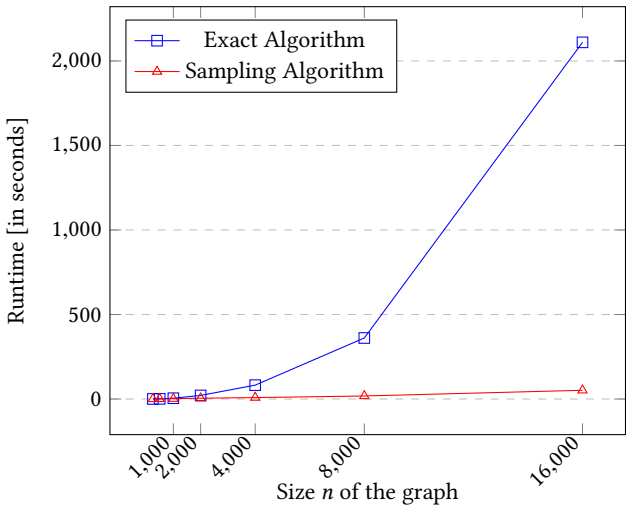
\begin{figure}[!htbp]
\begin{tikzpicture}[baseline]
\begin{axis}[
    xlabel={Size $n$ of the graph},
    ylabel={Runtime [in seconds]},
    xtick={1000,2000, 4000, 8000, 16000},
    legend pos=north west,
    ymajorgrids=true,
    grid style=dashed,
    legend entries={Exact Algorithm, Sampling Algorithm},
    scaled x ticks = false,
    x tick label style={rotate=45,anchor=east,/pgf/number format/fixed},   
]
 
\addplot[color=blue, mark=square]
    coordinates {(250,0.253)(500,1.112)(1000,4.753)(2000,21.193)(4000,82.0781)(8000,361.003)(16000,2110.084)};
    % Exact Algorithm
 
 \addplot[color=red, mark= triangle]
    coordinates {(250,0.478)(500,1.084)(1000,2.003)(2000,4.643)(4000,8.453)(8000,17.943)(16000,51.637)};
    % Sampling Algorithm
\end{axis}
\end{tikzpicture}
%
%\hskip 0pt % insert a non-breaking space of specified width.
%
%\begin{tikzpicture}[baseline]
%\begin{axis}[
%    title={Erdös-Renyi graphs},
%    xlabel={Size $n$ of the graph},
%    ylabel={Runtime [in seconds]},
%    %xmin=0, xmax=100,
%    %ymin=0, ymax=120,
%    xtick={1000, 2000, 4000, 8000, 16000},
%    %ytick={0,20,40,60,80,100,120},
%    legend pos=north west,
%    ymajorgrids=true,
%    grid style=dashed,
%    legend entries={Exact Algorithm, Sampling Algorithm},
%    scaled x ticks = false,
%    x tick label style={rotate=45,anchor=east,/pgf/number format/fixed},  
%]
% 
%\addplot[color=blue, mark=square]
%    coordinates {(250,0.390)(500,1.662)(1000,6.809)(2000,27.731)(4000,119.706)(8000,561.356)(16000,3113.728125)};
%    % Exact Algorithm
% 
% \addplot[color=red, mark= triangle]
%    coordinates {(250,0.740)(500,1.487)(1000,2.887)(2000,5.915)(4000,12.318)(8000,28.953)(16000,54.98125)};
%    % Sampling Algorithm
%\end{axis}
%\end{tikzpicture}
\caption{Scalability experiments, with $\epsilon = 0.05$.}
\label{fig:scalability-random}
\end{figure}

\section{Conclusion}

%We presented a sampling based algorithm to accurately estimate the percolation centrality of the vertices of graph with high probability. Our approach generalizes techniques previously employed by \cite{Riondato2016} for the estimation of the betweenness centrality. That is, results from the pseudo-dimension theory and the $\epsilon$-sample theorem were used as the main foundation of our work. We obtain a $o(n^3)$ algorithm to compute estimations for the percolation centrality either for one given vertex or all the vertices of the graph. %deixar esse parágrafo mais direto, com os resultados mesmo

We presented an algorithm with running time $\mathcal{O}(m \log^2 n)$ for estimating the percolation centrality for every vertex of a weighted graph. The estimation obtained by our algorithm is within $\epsilon$ of the exact value with probability $1- \delta$, for {\it fixed} constants $0 < \epsilon,\delta \leq 1$. The running time of the algorithm is reduced to $\mathcal{O}((m+n)\log n)$ if the input graph is unweighted. Since many large scale graphs are sparse and have small diameter (typically of size $\log n$), our algorithm provides a fast approximation for such graphs (more precisely running in $\mathcal{O}(n \log n \log \log n$) time). We validate our proposed method performing experiments on real world networks. As expected, our algorithm is much faster than the exact algorithm (around 30 times for the largest graph) and, furthermore, in practice the estimation error is many orders of magnitude smaller than the theoretical worst case guarantee for graphs of a variety of sizes.

%This indicates that the proposed approach is practical in real-world graphs, as the experimental evaluation performed by \cite{Piraveenan2013} and \cite{Riondato2016} in similar scenarios has suggested. 

%% The acknowledgments section is defined using the "acks" environment
%% (and NOT an unnumbered section). This ensures the proper
%% identification of the section in the article metadata, and the
%% consistent spelling of the heading.
\begin{acks}
This work was partially funded by CNPq(Proc. 428941/2016-8) and CAPES.
\end{acks}

%%
%% The next two lines define the bibliography style to be used, and
%% the bibliography file.
%\clearpage
\bibliographystyle{ACM-Reference-Format}
\bibliography{percolation}

\end{document}